\newcolumntype{M}[1]{>{\centering\arraybackslash}m{#1}}
\newtheorem{theorem}{\hskip\parindent \it Theorem}
\newtheorem{remark}{\hskip\parindent\it{Remark}}
\newtheorem{corollary}{\hskip\parindent\it{Corollary}}
\begin{document}

\title{Fluid Reconfigurable Intelligent Surface\\ with Element-Level Pattern Reconfigurability: Beamforming and Pattern Co-Design}

\author{Han Xiao,~\IEEEmembership{Graduate Student Member,~IEEE,} 
            Xiaoyan Hu,~\IEEEmembership{Member,~IEEE,} 
            Kai-Kit~Wong,~\IEEEmembership{Fellow,~IEEE}\\
            Xusheng Zhu,~\IEEEmembership{Member,~IEEE}, 
            Hanjiang Hong,~\IEEEmembership{Member,~IEEE},  
            Chan-Byoung Chae,~\IEEEmembership{Fellow,~IEEE}
\vspace{-8mm}

\thanks{H. Xiao and X. Hu are with the School of Information and Communications Engineering, Xi'an Jiaotong University, Xi'an 710049, China (e-mail: $\{\rm hanxiaonuli@stu.xjtu.edu.cn; xiaoyanhu@xjtu.edu.cn\}$).}
\thanks{K. Wong, X. Zhu and H. Hong are with the Department of Electronic and Electrical Engineering, University College London, London WC1E7JE, U.K.~(e-mail: $\rm \{kai\text{-}kit.wong,xusheng.zhu,hanjiang.hong\}@ucl.ac.uk$). K. Wong is also affiliated with Yonsei Frontier Laboratory, Yonsei University, Seoul, 03722, Republic of Korea.}
\thanks{C. B. Chae is with School of Integrated Technology, Yonsei University, Seoul, 03722, Republic of Korea (e-mail: $\rm cbchae@yonsei.ac.kr$).}
}
\maketitle

\begin{abstract}
This paper proposes a novel pattern-reconfigurable fluid reconfigurable intelligent surface (FRIS) framework, where each fluid element can dynamically adjust its radiation pattern based on instantaneous channel conditions. To evaluate its potential, we first conduct a comparative analysis of the received signal power in point-to-point communication systems assisted by three types of surfaces: (1) the proposed pattern-reconfigurable FRIS, (2) a position-reconfigurable FRIS, and (3) a conventional RIS. Theoretical results demonstrate that the pattern-reconfigurable FRIS provides a significant advantage in modulating transmission signals compared to the other two configurations. To further study its capabilities, we extend the framework to a multiuser communication scenario. In this context, the spherical harmonics orthogonal decomposition (SHOD) method is employed to accurately model the radiation patterns of individual fluid elements, making the pattern design process more tractable. An optimization problem is then formulated with the objective of maximizing the weighted sum rate among users by jointly designing the active beamforming vectors and the spherical harmonics coefficients, subject to both transmit power and pattern energy constraints. To tackle the resulting non-convex optimization problem, we propose an iterative algorithm that alternates between a minimum mean-square error (MMSE) approach for active beamforming and a Riemannian conjugate gradient (RCG) method for updating the spherical harmonics coefficients. Simulation results show that the proposed pattern-reconfigurable FRIS significantly outperforms traditional RIS architectures based on the 3GPP 38.901 and isotropic radiation models, achieving average performance gains of $161.5\%$ and $176.2\%$, respectively. Additionally, it reduces the required number of antennas and RIS elements by over $300\%$, offering substantial improvements in hardware efficiency.
\end{abstract}

\begin{IEEEkeywords}
fluid antenna system (FAS), fluid reconfigurable intelligent surface (FRIS), pattern reconfigurability, RIS.
\end{IEEEkeywords}
\IEEEpeerreviewmaketitle

\vspace{-2mm}
\section{Introduction}\label{sec:S1}
\IEEEPARstart{R}{econfigurable} intelligent surface (RIS) technology has recently attracted considerable attention in the wireless communications community due to its ability to dynamically manipulate the electromagnetic environment to control wireless signal propagation \cite{huang2019reconfigurable,Basar-2019}. By doing so, RIS enables the creation of an end-to-end, controllable virtual channel between transceivers, which can significantly enhance system coverage---particularly in scenarios where direct line-of-sight (LoS) links are severely attenuated or blocked, a challenge that becomes increasingly prevalent at higher carrier frequencies \cite{Zhou-RIS2021}. Owing to these capabilities, RIS has been widely explored across a range of wireless network applications to boost overall system performance. These include secure communications \cite{li2025covert, liu2025ris, xiao2024simultaneously, xiao2025robust}, integrated sensing and communication (ISAC) \cite{liu2025exploiting, xu2024intelligent, du2025intelligent}, mobile edge computing (MEC) \cite{huIRS2023,xiao2025star-ris, xiao2025energy, hu2021reconfigurable}, cell-free network architectures \cite{chen2024distributed}, among others.

Despite its promise, the practical implementation of RIS faces several critical challenges. One major issue lies in the complexity of channel estimation between transceivers \cite{jian2022reconfigurable}. Because RIS operates passively, achieving high system performance often requires deploying many reflecting elements. However, increasing the density of these elements dramatically raises the pilot overhead and adds substantial complexity to both system design and the channel estimation process, making real-time operation and scalability difficult. Another significant challenge is the multiplicative fading effect inherent in RIS-assisted systems \cite{zhi2022active}. In such systems, the overall path loss of the transmitter-RIS-receiver link becomes the product of the individual losses of the transmitter-to-RIS and RIS-to-receiver channels. This multiplicative nature leads to severe signal attenuation, which can substantially reduce the effectiveness of RIS in improving communication quality, particularly in environments with limited/obstructed LoS or high propagation losses. To overcome these limitations, a promising direction is to introduce new degrees of freedom (DoFs) into RIS-assisted systems, thereby enhancing their flexibility and adaptability---without significantly increasing hardware complexity.

\vspace{-2mm}
\subsection{Related Works}
Against this backdrop, originally proposed by Wong {\em et al.}~in \cite{I22_wong2020perflim,wong2021fluid}, the fluid antenna system (FAS) has emerged as a promising solution. FAS represents a new class of reconfigurable antennas \cite{I27_basbug2017design,I24_shen2024design,zhang2025novel,Liu-2025arxiv} capable of dynamically adjusting both their position and shape, thereby introducing additional DoFs at the physical layer \cite{new2024tutorial,Hong-2025arxiv, Lu-2025}. In recent years, considerable progress has been made in exploring the potential of FAS in various scenarios. For instance, \cite{new2024information} analyzed the diversity-multiplexing trade-off in FAS-enabled multiple-input multiple-output (MIMO) systems, highlighting the system's ability to flexibly balance reliability and throughput. In another line of work, FAS has been integrated into massive multiple access schemes \cite{wong2022fluid, wong2023slow, hong2025coded}, where it has been revealed to support many users over the same channel without the need for precoding. Additionally, the versatility of FAS has been demonstrated across a range of emerging communication paradigms, including MEC \cite{zou2024fluid}, ISAC \cite{zou2024shifting}, and physical-layer security \cite{tang2023fluid}, further underscoring its value as a foundational technology for next-generation wireless networks.

Motivated by the concept of FAS, Ye {\em et al.}~\cite{ye2025joint} introduced the idea of position reconfigurability into RIS systems and proposed the concept of fluid RIS (FRIS). They applied FRIS to an ISAC system and demonstrated, through extensive simulations, that it significantly outperforms conventional RIS in terms of system performance. However, the implementation in \cite{ye2025joint} relies on physically movable elements to achieve position reconfigurability, which presents practical challenges. In particular, issues such as hardware complexity, slow mechanical response, and concerns over reliability and maintenance raise questions about its feasibility in real-world systems. To address these challenges, Salem {\em et al.}~\cite{salem2025first} revisited the FRIS concept by redefining the ``fluid'' element as a subregion of the RIS, wherein the reflection element can be dynamically switched among densely spaced ports without requiring physical movement. The results in \cite{salem2025first} illustrated substantial performance improvements over traditional RIS, highlighting the promise of FRIS in a more practical form. Building on this work, Xiao {\em et al.}~\cite{xiao2025fluid} proposed an even more implementation-friendly FRIS architecture, developed within the framework of conventional RIS-aided systems. In this approach, the entire RIS is treated as a single fluid element, and the system dynamically selects and activates reflecting elements at specific positions based on instantaneous channel conditions. Importantly, the design also considers the hardware limitations of discrete phase shifts typically found in practical RIS situations. This added realism further enhances the feasibility of the proposed scheme and strengthens its potential for practical deployment. Recently, the FAS concept has also been integrated with the simultaneous transmitting and reflecting surface (STAR)-RIS (STAR-RIS), leading to the term, known as fluid integrated reflecting and emitting surfaces (FIRES) introduced in \cite{Ghadi-2025fires}.

\vspace{-2mm}
\subsection{Motivations and Contributions}
Although existing FRIS schemes clearly show the potential of introducing position-reconfigurable DoF into traditional RIS-assisted systems for performance enhancement, accurate channel state information (CSI) is essential for selecting the optimal reflecting element positions. Recently, some channel estimation techniques originally developed for FAS, such as those in \cite{xu2024channel,wee2025channel,zhang2025successive}, exploit spatial correlation to reduce the dimensionality of the CSI estimation problem and can be adapted to the FRIS framework. Nevertheless, one major issue of the current FAS and FRIS research is that each point of radiation or selected port is treated as a point source of zero dimension with an omnidirectional pattern. This simplified modelling was useful in avoiding consideration of the intricate antenna structure and allowing us to focus on the diversity of a position-reconfigurable channel of FAS. That said, this clearly deviates from the reality and a selected port comes with a certain radiation pattern dictated by the antenna and element structure. It is therefore imperative to explore a new FRIS paradigm that incorporates a more realistic modelling of the element so that the performance of FRIS can be better understood.

Recently, there was a major development: Zhang {\em et al.}~\cite{zhang2025novel} developed an FAS prototype that is capable of generating $12$ distinct reconfigurable states---or {\em fluid patterns}---each representing a unique antenna configuration created through a specific pixel connection scheme. A reconfigurable state can be viewed as a functional equivalent of an antenna positioned at a different location. Notably, a closer examination reveals that, analogous to spatial diversity achieved through position reconfigurability, spatial diversity also manifests across the different channels induced by these fluid patterns. In this sense, within FAS systems, ``reconfigurable position'' and ``reconfigurable pattern'' are, to some extent, functionally interchangeable in terms of signal control. However, considering pattern instead of position provides a route to connect to the actual antenna structure for end-to-end performance analysis.

Motivated by this insight, pattern reconfigurability emerges as an alternative to position reconfigurability for overcoming the limitations of position-reconfigurable FRIS systems. As a result, this paper introduces a novel pattern-reconfigurable FRIS framework. Specifically, we first explore the potential of pattern reconfigurability for signal manipulation and provide an in-depth analysis of its underlying operating principles. To further demonstrate the practical value of the proposed framework, we apply it to a multiuser communication scenario and present extensive simulation results to systematically validate its effectiveness and performance gains. The main contributions of this paper are summarized as follows:
\begin{itemize}
\item {\em A First Look at the Performance Enhancement Potential of Pattern-Reconfigurable FRIS}---We begin with a preliminary study in a point-to-point RIS-assisted system, comparing the received signal power achieved by pattern-reconfigurable FRIS, position-reconfigurable FRIS, and traditional RIS with passive beamforming. The obtained findings indicate that both pattern reconfigurability and position reconfigurability can independently manipulate each multi-path component, whereas traditional passive beamforming can only apply a unified control over all multi-path signals, lacking fine-grained adjustment capabilities. This highlights the high flexibility of pattern and position reconfigurability in enhancing signal quality. Furthermore, we observe that, compared to position reconfigurability, pattern reconfigurability is more effective in enabling the received signal power to approach its theoretical upper bound. This suggests that pattern-reconfigurable FRIS could serve as a practical and efficient alternative to position-reconfigurable FRIS.
\item {\em Model and Problem Formulation}---To further exploit the capability of the pattern-reconfigurable FRIS, we incorporate it into a multiuser communication scenario. Considering that there is no explicit analytical mapping between the azimuth and elevation angles and the radiation pattern gain, optimizing the radiation pattern of fluid elements becomes a significant challenge. To solve this issue, the spherical harmonics orthogonal decomposition (SHOD) method is used to accurately model the radiation patterns of fluid elements, making pattern design more tractable. Subsequently, we formulate an optimization problem to maximize the weighted sum rate by jointly designing active beamforming and spherical harmonics coefficients, subject to the power budget at the base station (BS) and the energy constraints of the radiation patterns. 
\item {\em Effective Algorithm with Guaranteed Convergence}---The formulated problem is non-convex due to the strong coupling between variables and the inherent non-convexity of both the objective function and the constraints. To address this problem, an iterative algorithm based on minimum mean-square error (MMSE) method and Riemannian conjugate gradient (RCG) method is proposed to alternatively solve the active beamforming and spherical harmonics coefficients. The convergence of the proposed algorithm is guaranteed and validated through simulation results. 
\item {\em Performance Improvement}---To quantify the performance gain achievable by the pattern-reconfigurable FRIS, extensive simulations are conducted. The obtained results indicate that compared with traditional RIS employing 3GPP 38.901 radiation patterns and isotropic patterns, the proposed pattern-reconfigurable FRIS achieves average performance gains of $161.5\%$ and $176.2\%$, respectively. Moreover, it can reduce the required number of antennas and elements by over $300\%$ while maintaining the same performance level, demonstrating exceptional hardware efficiency. These results highlight the unparalleled potential of the FRIS in enhancing system performance.
\end{itemize}


\textit{Notations}---$\mathbb{Z}$ denotes the integer set. The operations $(\cdot)^T$, $(\cdot)^*$  and $(\cdot)^H$ denote the transpose, conjugate and conjugate transpose, correspondingly.  $\operatorname{Diag}(\mathbf{a})$ denotes a diagonal matrix whose diagonal elements are composed of the vector $\mathbf{a}$.  $\operatorname{Blkdiag}\{\mathbf{a}_1, \mathbf{a}_2, \dots, \mathbf{a}_M\}$ represents a block diagonal matrix. $\operatorname{Ddiag}(\mathbf{A})=\operatorname{Diag}(\operatorname{diag}(\mathbf{A}))$. Additionally, $\mathbf{I}_{M}$ and $\mathbf{1}_{M\times 1}$ represent an $M\times M$ identity matrix and $M\times 1$ vector of ones, respectively. Also, the symbols $|\cdot|$  and $\|\cdot\|_2$ are indicative of the complex modulus and spectral norm, respectively. Moreover, operator $\circ$ and $\otimes$ denote the Hadamard product and Kronecker product. Finally, $\operatorname{R}(\cdot)$ denotes the real part operator.

\vspace{-2mm}
\section{A Preliminary Study on Pattern-Reconfigurable FRIS}\label{sec:S2} 
In this section, we provide a preliminary investigation into the performance gains offered by pattern-reconfigurable FRIS, through a comparative analysis of the received signal power for pattern-reconfigurable FRIS, position-reconfigurable FRIS and traditional RIS with passive beamforming. This analysis is based on the principle that all three schemes fundamentally operate by manipulating signal phase to enhance the channel quality between the transmitter and the receiver. To enable an intuitive and effective comparison, a simple point-to-point communication scenario is considered.

Specifically, the system consists of a BS transmitter, a user receiver, and an RIS equipped with $M$ elements. The RIS is assumed to operate in one of the following modes to modulate the incident signals: pattern-reconfigurable-only (pattern-reconfigurable FRIS), position-reconfigurable-only (position-reconfigurable FRIS), or passive beamforming-only (i.e., traditional RIS). It is assumed that the signal transmission from the transmitter to the receiver relies entirely on the assistance of the RIS, with no direct link between them.

The channels between RIS and transmitter/receiver can be modeled as \cite{ying2024reconfigurable}
\begin{equation}
\mathbf{h}_\mathrm{br}=\sum_{l=1}^{L}\mathbf{h}_{\mathrm{br}, l},~\mbox{and}~
\mathbf{h}_\mathrm{ru}=\sum_{z=1}^{Z}\mathbf{h}_{\mathrm{ru}, z},
\end{equation}
with
\begin{equation}
\left\{\begin{aligned}
\mathbf{h}_{\mathrm{br}, l}&=[g_{\mathrm{br}, l}f_1(\theta_{\mathrm{r}, l}, \phi_{\mathrm{r},l})e^{j\frac{2\pi}{\lambda}\mathbf{k}^T_{\mathrm{r}, l}\mathbf{p}_1}, \dots, \\
&\quad\quad\quad\quad g_{\mathrm{br}, l} f_M(\theta_{\mathrm{r}, l}, \phi_{\mathrm{r},l}) e^{j\frac{2\pi}{\lambda}\mathbf{k}^T_{\mathrm{r}, l}\mathbf{p}_M}]^T\in\mathbb{C}^{M\times 1},\\
\mathbf{h}_{\mathrm{ru}, z}&=[g_{\mathrm{ru}, z}f_1(\theta_{\mathrm{t}, z}, \phi_{\mathrm{t},z})e^{j\frac{2\pi}{\lambda}\mathbf{k}^T_{\mathrm{t}, z}\mathbf{p}_1}, \dots, \\
&\quad\quad\quad\quad g_{\mathrm{ru}, z} f_M(\theta_{\mathrm{t}, z}, \phi_{\mathrm{t},z})e^{j\frac{2\pi}{\lambda}\mathbf{k}^T_{\mathrm{t}, z}\mathbf{p}_M}]^T\in\mathbb{C}^{M\times 1},
\end{aligned}\right.
\end{equation}
where $L$ and $Z$ are the number of propagation paths from the BS transmitter to the RIS, and from the RIS to the receiver, respectively. Also, $g_{\mathrm{br}, l}$ and $g_{\mathrm{ru}, z}$ are the complex channel gain of the $l$-th path from the transmitter to RIS and the $z$-th path from the RIS to receiver, respectively. Additionally, $\mathbf{p}_m$, for $m\in\mathcal{M}\triangleq\{1, \dots, M\}$, is the $m$-th element's position, $f_m(\theta_{\mathrm{t},z}, \phi_{\mathrm{t},z})$ and $f_m(\theta_{\mathrm{r},l}, \phi_{\mathrm{r},l})$ denote the transmitting  pattern gain and  receiving pattern gain of the $m$-th element. Also, $\mathbf{k}_{\mathrm{r}, l}=[\sin\theta_{\mathrm{r},l}\sin\phi_{\mathrm{r},l}, \sin\theta_{\mathrm{r},l}\cos\phi_{\mathrm{r},l}, \cos\theta_{\mathrm{r},l}]^T$ and $\mathbf{k}_{\mathrm{t}, z}=[\sin\theta_{\mathrm{t}, z}\sin\phi_{\mathrm{t}, z}, \sin\theta_{\mathrm{t}, z}\cos\phi_{\mathrm{t}, z}, \cos\theta_{\mathrm{t}, z}]^T$ are the normalized wave vectors, and $\lambda$ is the carrier wavelength. The following four cases are utilized to unveil the performance enhancement potential of pattern-reconfigurable FRIS.

{\em Case 1: Single-element Single-path Scenario}---Consider in this case $L=Z=M=1$. Then the received signal can be expressed as
\begin{align}
y&=g^*_{\mathrm{ru}}f^*(\theta_{\mathrm{t}}, \phi_{\mathrm{t}})e^{-j\frac{2\pi}{\lambda}\mathbf{k}_{\mathrm{t}}^T\mathbf{p}}\vartheta g_{\mathrm{br}}
f(\theta_{\mathrm{r}}, \phi_{\mathrm{r}})e^{j\frac{2\pi}{\lambda}\mathbf{k}_{\mathrm{r}}^T\mathbf{p}}s +n\notag\\
&=gf_\mathrm{r, t}(\theta, \phi)\vartheta e^{j\frac{2\pi}{\lambda}(\mathbf{k}_{\mathrm{r}}- \mathbf{k}_{\mathrm{t}})^T\mathbf{p}}s +n,
\end{align}
where $g\triangleq g^*_{\mathrm{ru}}g_{\mathrm{br}}$, $f_\mathrm{r, t}(\theta, \phi)\triangleq f^*(\theta_{\mathrm{t}}, \phi_{\mathrm{t}})f(\theta_{\mathrm{r}}, \phi_{\mathrm{r}})$, $s$ is the transmitting symbol with $\mathbb{E}[|s|^2]=1$, $n\sim{\cal CN}(0,\sigma^2)$ is the additive white Gaussian noise (AWGN) with power $\sigma^2$, $\vartheta$ denotes the reflection coefficient. Note that we temporarily neglect the index $l$, $z$ and $m$ for conciseness.  Thus, the received power can be given by
$
	P_\mathrm{r}=\left|gf_\mathrm{r, t}(\theta, \phi)\vartheta e^{j\frac{2\pi}{\lambda}(\mathbf{k}_{\mathrm{r}}- \mathbf{k}_{\mathrm{t}})^T\mathbf{p}}\right|^2+\sigma^2.
$ 
It is observed that irrespective of variations in the radiation pattern, element positions, or reflection coefficients, the received power remains invariant in this scenario. This invariance indicates that the potential of pattern-reconfigurable FRIS cannot be effectively validated under such a setup.

{\em Case 2: Single-element Multi-path Scenario}---In this case, $M=1$ and we have the received signal expressed as
\begin{align}
y&=\sqrt{\frac{1}{LZ}}\sum_{l=1}^{L}\sum_{z=1}^{Z}g^*_{\mathrm{ru}, z}f^*(\theta_{\mathrm{t}, z}, \phi_{\mathrm{t}, z})e^{-j\frac{2\pi}{\lambda}\mathbf{k}_{\mathrm{t}, z}^T\mathbf{p}}\vartheta g_{\mathrm{br}, l}\notag\\
&\quad\quad\quad\quad\quad\quad\quad \times f(\theta_{\mathrm{r}, l}, \phi_{\mathrm{r}, l})e^{j\frac{2\pi}{\lambda}\mathbf{k}_{\mathrm{r}, l}^T\mathbf{p}}s +n\notag\\
&=\sqrt{\frac{1}{LZ}}\sum_{l=1}^{L}\sum_{z=1}^{Z}g_{l, z}f^{l, z}_\mathrm{r, t}(\theta, \phi)\vartheta e^{j\frac{2\pi}{\lambda}(\mathbf{k}_{\mathrm{r}, l}- \mathbf{k}_{\mathrm{t}, z})^T\mathbf{p}}s +n.
\end{align}
 Thus, the power of the received signal is given by
\begin{align}\label{eq_receive_power_2}
P_\mathrm{r}=\left|\sqrt{\frac{1}{LZ}}\sum_{l=1}^{L}\sum_{z=1}^{Z}g_{l, z}f^{l, z}_\mathrm{r, t}(\theta, \phi)\vartheta e^{j\frac{2\pi}{\lambda}(\mathbf{k}_{\mathrm{r}, l}- \mathbf{k}_{\mathrm{t}, z})^T\mathbf{p}}\right|^2+\sigma^2.
\end{align}

\begin{corollary}\label{coro1}
With the assistance of passive beamforming, the optimal received power can be expressed as
\begin{align}
P^\mathrm{opt}_\mathrm{r, pb} = \frac{1}{LZ} \left|\sum_{l=1}^{L}\sum_{z=1}^{Z} g_{l, z} f^{l, z}_\mathrm{r, t}(\theta, \phi) e^{j\frac{2\pi}{\lambda}(\mathbf{k}_{\mathrm{r}, l}- \mathbf{k}_{\mathrm{t}, z})^T\mathbf{p}}\right|^2+\sigma^2.
\end{align}
The corresponding optimal reflection coefficient should satisfy
\begin{multline}
\angle \vartheta = -\angle \left( \sum_{l=1}^{L} \sum_{z=1}^{Z} g_{l, z} f^{l, z}_\mathrm{r, t}(\theta, \phi)e^{j\frac{2\pi}{\lambda}(\mathbf{k}_{\mathrm{r}, l}- \mathbf{k}_{\mathrm{t}, z})^T\mathbf{p}} \right)\\
+2k_1\pi, ~k_1\in\mathbb{Z}.
\end{multline}
\end{corollary}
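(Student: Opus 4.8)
The plan is to exploit the defining feature of the single-element regime of Case~2: the reflection coefficient $\vartheta$ is a single scalar that premultiplies every multipath term in \eqref{eq_receive_power_2} identically, since all $L\times Z$ components share the same element position $\mathbf{p}$ and the same reflection response. First I would pull $\vartheta$ outside the double summation and write $P_\mathrm{r}=|\vartheta|^2\,|S|^2+\sigma^2$, where $S\triangleq\frac{1}{\sqrt{LZ}}\sum_{l=1}^{L}\sum_{z=1}^{Z} g_{l,z} f^{l,z}_\mathrm{r,t}(\theta,\phi)\, e^{j\frac{2\pi}{\lambda}(\mathbf{k}_{\mathrm{r},l}-\mathbf{k}_{\mathrm{t},z})^T\mathbf{p}}$ is the aggregate coherently-combined channel coefficient. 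This cleanly decouples the only tunable quantity, $\vartheta$, from the fixed multipath superposition $S$.

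Next I would impose the passive-beamforming constraint $|\vartheta|\le 1$ (unit modulus in the lossless case) and maximize. Since $|S|^2$ does not depend on $\vartheta$ and the power is monotone in $|\vartheta|^2$, the optimum is attained at $|\vartheta|=1$, which immediately yields $P^\mathrm{opt}_\mathrm{r,pb}=|S|^2+\sigma^2$; substituting the definition of $S$ recovers exactly the claimed expression.

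The subtlety—and the point the corollary is really making—lies in the phase. Because $\vartheta$ enters only through $|\vartheta|^2$, the received power is invariant to $\angle\vartheta$: a single element provides no per-path phase control, only one uniform rotation of the entire superposition $S$. Hence the stated condition $\angle\vartheta=-\angle S+2k_1\pi$ is not forced by power maximization but is the canonical co-phasing choice that renders $\vartheta S=|S|$ real and nonnegative; every $k_1\in\mathbb{Z}$ is admissible owing to the $2\pi$ periodicity of the phase. I would present it in this way precisely to emphasize the contrast with pattern- and position-reconfigurability, where each path can instead be aligned separately.

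I anticipate the main obstacle to be conceptual rather than computational: one must make explicit that, under a single-element unit-modulus constraint, the phase degree of freedom is \emph{degenerate} for the received power, so that the ``optimal'' phase is a representative co-phasing convention rather than a strict maximizer. Articulating this degeneracy, and framing it as the inherent uniform-control limitation of passive beamforming, is the crux that connects this otherwise elementary calculation to the paper's central comparison.
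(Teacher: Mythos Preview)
Your argument is correct and is precisely the natural one the paper relies on implicitly (Corollary~\ref{coro1} is stated without proof): factor the common scalar $\vartheta$ out of the double sum, impose the unit-modulus constraint $|\vartheta|=1$, and read off the result. Your remark that $\angle\vartheta$ is degenerate for $P_\mathrm{r}$ in the single-element case is accurate and well-observed; the stated co-phasing condition is indeed a representative optimizer, chosen to set up the contrast with Corollaries~\ref{coro2}--\ref{coro3} by showing that passive beamforming can only rotate the entire superposition $S$ rather than align individual paths.
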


\begin{corollary}\label{coro2}
When the reflection coefficient and radiation pattern are fixed, in order to maximize the received power, the element position should satisfy
\begin{multline}\label{eq_position_opt_1}
\frac{2\pi}{\lambda}(\mathbf{k}_{\mathrm{r}, l}- \mathbf{k}_{\mathrm{t}, z})^T\mathbf{p}=-\angle g_{l, z}f^{l, z}_\mathrm{r, t}(\theta, \phi)\vartheta\\
+2k_2\pi,~\forall l\in\mathcal{L}, z\in\mathcal{Z}, k_2\in\mathbb{Z},
\end{multline}  
which gives the received power 
\begin{equation}\label{eq_position_upper}
P^\mathrm{opt}_\mathrm{r, pp} = \frac{1}{LZ} \left( \sum_{l=1}^{L} \sum_{z=1}^{Z} \left|g_{l, z} f^{l, z}_\mathrm{r, t}(\theta, \phi)\right| \right)^2+\sigma^2.
\end{equation}
\end{corollary}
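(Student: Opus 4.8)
The plan is to recognize \eqref{eq_receive_power_2} as a coherent-combining problem over the $LZ$ path pairs and to maximize it by the triangle inequality, in direct parallel with the argument underlying Corollary~\ref{coro1}. First I would collapse the fixed quantities into a single complex coefficient per path pair, writing $a_{l,z}\triangleq g_{l,z}f^{l,z}_\mathrm{r,t}(\theta,\phi)\vartheta$ (which is constant here, since the reflection coefficient and the radiation pattern are held fixed), and isolate the only position-dependent quantity as the phase $\psi_{l,z}(\mathbf{p})\triangleq\frac{2\pi}{\lambda}(\mathbf{k}_{\mathrm{r},l}-\mathbf{k}_{\mathrm{t},z})^T\mathbf{p}$. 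In this notation the received power reads $P_\mathrm{r}=\frac{1}{LZ}\bigl|\sum_{l=1}^{L}\sum_{z=1}^{Z}a_{l,z}e^{j\psi_{l,z}(\mathbf{p})}\bigr|^2+\sigma^2$, where the additive $\sigma^2$ plays no role in the maximization.

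The core step is the triangle inequality $\bigl|\sum_{l,z}a_{l,z}e^{j\psi_{l,z}}\bigr|\le\sum_{l,z}\bigl|a_{l,z}e^{j\psi_{l,z}}\bigr|=\sum_{l,z}|a_{l,z}|$, in which the unit-modulus exponentials drop out. Equality holds precisely when all summands $a_{l,z}e^{j\psi_{l,z}}$ share a common argument, i.e.\ when the $LZ$ contributions add fully coherently. Since only the magnitude enters $P_\mathrm{r}$, I am free to set that common argument to zero, which forces $\angle a_{l,z}+\psi_{l,z}=2k_2\pi$ and, upon rearranging, reproduces exactly the stated alignment condition \eqref{eq_position_opt_1}. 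Substituting this back, together with $|\vartheta|=1$ for a passive reflecting element so that $|a_{l,z}|=\bigl|g_{l,z}f^{l,z}_\mathrm{r,t}(\theta,\phi)\bigr|$, yields the claimed value \eqref{eq_position_upper}.

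The step that deserves the most care is \emph{feasibility} rather than the algebra. The coherent-combining condition imposes $LZ$ separate phase equalities, whereas a single element position $\mathbf{p}\in\mathbb{R}^3$ supplies only three continuous degrees of freedom; hence for $LZ>3$ no single $\mathbf{p}$ can in general satisfy all of them simultaneously. Accordingly I would present \eqref{eq_position_upper} as a theoretical \emph{upper bound} (consistent with its label) obtained by relaxing to independent per-path phase control, and characterize \eqref{eq_position_opt_1} as the condition that would attain it. Finally I would highlight the comparison with Corollary~\ref{coro1}: because the coherent sum of magnitudes always dominates the magnitude of the aligned sum, $P^\mathrm{opt}_\mathrm{r,pp}\ge P^\mathrm{opt}_\mathrm{r,pb}$, which is exactly the quantitative statement that position (and, by the paper's analogy, pattern) reconfigurability can outperform a single unified passive-beamforming phase.
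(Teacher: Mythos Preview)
Your proposal is correct and matches the paper's (implicit) reasoning: the paper states Corollary~\ref{coro2} without a detailed proof, relying on exactly the triangle-inequality/coherent-combining argument you spell out, and it likewise flags \eqref{eq_position_upper} as only a theoretical upper bound because a single $\mathbf{p}\in\mathbb{R}^3$ generally cannot meet all $LZ$ phase conditions in \eqref{eq_position_opt_1}. Your added observations about $|\vartheta|=1$ and the comparison $P^\mathrm{opt}_\mathrm{r,pp}\ge P^\mathrm{opt}_\mathrm{r,pb}$ are consistent with the paper's subsequent Remark.
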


Actually, it is challenging to find a feasible position $\mathbf{p}$ that satisfies all the conditions specified in \eqref{eq_position_opt_1}. Therefore, the result in \eqref{eq_position_upper} serves only as a theoretical upper bound of the received power in the position reconfigurability scheme.

\begin{corollary}\label{coro3}
When the reflection coefficient and position are fixed, and the radiation pattern is designed accordingly, the optimal received power can be derived as
\begin{align}
P^\mathrm{opt}_\mathrm{r, pr} = \frac{1}{LZ} \left( \sum_{l=1}^{L} \sum_{z=1}^{Z} \left|g_{l, z} f^{l, z}_\mathrm{r, t}(\theta, \phi)\right| \right)^2+\sigma^2,
\end{align}
which is a theoretic upper bound of the received power in \eqref{eq_receive_power_2}. The corresponding optimal pattern should satisfy
\begin{equation}
\angle f^{l, z}_\mathrm{r, t}(\theta, \phi) = -\angle \left(\vartheta g_{l, z}  e^{j\frac{2\pi}{\lambda}(\mathbf{k}_{\mathrm{r}, l}- \mathbf{k}_{\mathrm{t}, z})^T\mathbf{p}}\right),~ \forall l\in\mathcal{L}, z\in\mathcal{Z}.
\end{equation}
\end{corollary}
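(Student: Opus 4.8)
The plan is to obtain both the bound and its achievability from a single application of the triangle inequality to the complex sum inside \eqref{eq_receive_power_2}. First I would rewrite the received power as $P_\mathrm{r} = \frac{1}{LZ}\bigl|\sum_{l=1}^{L}\sum_{z=1}^{Z} c_{l,z}\bigr|^2 + \sigma^2$, where $c_{l,z} \triangleq g_{l,z} f^{l,z}_\mathrm{r,t}(\theta,\phi)\,\vartheta\, e^{j\frac{2\pi}{\lambda}(\mathbf{k}_{\mathrm{r},l}-\mathbf{k}_{\mathrm{t},z})^T\mathbf{p}}$ is a complex scalar. Since the unit-modulus phasor $e^{j(\cdots)}$ and the passive reflection coefficient (for which $|\vartheta|=1$) do not affect the modulus, we have $|c_{l,z}| = \bigl|g_{l,z} f^{l,z}_\mathrm{r,t}(\theta,\phi)\bigr|$. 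The triangle inequality then gives $\bigl|\sum_{l,z} c_{l,z}\bigr| \le \sum_{l,z}|c_{l,z}|$, and squaring and scaling by $\tfrac{1}{LZ}$ produces exactly $P^\mathrm{opt}_\mathrm{r,pr}$, establishing the upper bound.

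Next I would characterize the equality condition, which is tight precisely when all the terms $c_{l,z}$ are co-phased. The cleanest choice is to drive every $c_{l,z}$ onto the positive real axis by imposing $\angle c_{l,z}=0$; solving this for the pattern phase yields $\angle f^{l,z}_\mathrm{r,t}(\theta,\phi) = -\angle\bigl(\vartheta g_{l,z} e^{j\frac{2\pi}{\lambda}(\mathbf{k}_{\mathrm{r},l}-\mathbf{k}_{\mathrm{t},z})^T\mathbf{p}}\bigr)$, which is the stated optimal-pattern condition. Substituting this back renders each term real and nonnegative, so the sum attains $\sum_{l,z}|c_{l,z}|$ and the inequality becomes an equality, confirming that the bound is achieved by the prescribed pattern.

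The main obstacle—and the point that separates this result from Corollary~\ref{coro2}—is justifying that the phase-alignment condition is genuinely realizable. For position reconfigurability there is only the single vector $\mathbf{p}$ available, so the $LZ$ conditions in \eqref{eq_position_opt_1} are generically over-determined and \eqref{eq_position_upper} remains only a theoretical ceiling; the whole force of Corollary~\ref{coro3} is that the pattern, being a function over the entire angular domain, can in principle fix its phase independently at each incidence/reflection direction. I would therefore stress that this extra angular degree of freedom is what converts the triangle-inequality ceiling into an attainable optimum. I would, however, flag the coupling hidden in $f^{l,z}_\mathrm{r,t}(\theta,\phi)=f^*(\theta_{\mathrm{t},z},\phi_{\mathrm{t},z})f(\theta_{\mathrm{r},l},\phi_{\mathrm{r},l})$: the $LZ$ target phases are assembled from only $L+Z$ underlying per-direction phases, so a fully rigorous argument must either invoke the preliminary-model assumption that each $f^{l,z}_\mathrm{r,t}$ is independently designable, or restrict to the separable case in which the targets are mutually consistent. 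This is exactly where the reasoning needs the most care, and I would note it before deferring the realistic, energy-constrained realization to the SHOD-based design developed in the sequel.
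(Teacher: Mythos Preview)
Your proposal is correct and matches the paper's (implicit) argument: the corollary is stated without a separate proof, and the intended reasoning is exactly the triangle-inequality bound on the complex sum in \eqref{eq_receive_power_2} together with phase alignment of each term via the pattern. Your discussion of the $L+Z$ versus $LZ$ coupling hidden in $f^{l,z}_\mathrm{r,t}=f^*(\theta_{\mathrm{t},z},\phi_{\mathrm{t},z})f(\theta_{\mathrm{r},l},\phi_{\mathrm{r},l})$ actually goes beyond the paper, which simply asserts in the subsequent Remark that the pattern-reconfigurable scheme ``can readily achieve this bound'' without confronting that separability issue.
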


{\em Case 3: Multi-element Single-path Scenario}---Here we have $L=Z=1$. Thus, the received signal is found as
\begin{equation}
y=\sum_{m=1}^{M}\vartheta_mg_{m}f^{m}_\mathrm{r, t}(\theta, \phi) e^{j\frac{2\pi}{\lambda}(\mathbf{k}_{\mathrm{r}}- \mathbf{k}_{\mathrm{t}})^T\mathbf{p}_m}s +n.
\end{equation}
In this case, the reflection coefficients, element positions, and radiation patterns can be independently designed to enable the received power to approach the theoretical upper bound
\begin{align}\label{eq_receive_power_3}
P^\mathrm{opt}_\mathrm{r}=\left(\sum_{m=1}^{M}\left|g_{m}f^{m}_\mathrm{r, t}(\theta, \phi)\right|\right)^2+\sigma^2.
\end{align}

{\em Case 4: Multi-element Multi-path Scenario}---The represents the most general case, with the received signal given by
\begin{multline}
y=\sqrt{\frac{1}{LZ}}\sum_{m=1}^{M}\vartheta_m\sum_{l=1}^{L}\sum_{z=1}^{Z}g_{m, l, z}f^{m, l, z}_\mathrm{r, t}(\theta, \phi)\\
\times e^{j\frac{2\pi}{\lambda}(\mathbf{k}_{\mathrm{r}, m, l}- \mathbf{k}_{\mathrm{t}, m, z})^T\mathbf{p}_m}s +n.
\end{multline}
The received power can then be expressed as
\begin{multline}
P_\mathrm{r}=\left|\sqrt{\frac{1}{LZ}}\sum_{m=1}^{M}\vartheta_m\sum_{l=1}^{L}\sum_{z=1}^{Z}g_{m, l, z}f^{m, l, z}_\mathrm{r, t}(\theta, \phi)\right.\\
\left.\times e^{j\frac{2\pi}{\lambda}(\mathbf{k}_{\mathrm{r}, m, l}- \mathbf{k}_{\mathrm{t}, m, z})^T\mathbf{p}_m}\right|^2+\sigma^2.
\end{multline}

\begin{corollary}\label{coro4}
With the assistance of passive beamforming, the optimal received power can be expressed as
\begin{multline}
P^\mathrm{opt}_\mathrm{r, pb} = \frac{1}{LZ} \left(\sum_{m=1}^{M}\left|\sum_{l=1}^{L}\sum_{z=1}^{Z}g_{m, l, z}f^{m, l, z}_\mathrm{r, t}(\theta, \phi)\right.\right.\\
\left.\left.\times e^{j\frac{2\pi}{\lambda}(\mathbf{k}_{\mathrm{r}, m, l}- \mathbf{k}_{\mathrm{t}, m, z})^T\mathbf{p}_m}\right|\right)^2+\sigma^2,
\end{multline}
with
\begin{multline}
\angle \vartheta_m = -\angle \left(\sum_{l=1}^{L}\sum_{z=1}^{Z}f^{m, l, z}_\mathrm{r, t}(\theta, \phi)e^{j\frac{2\pi}{\lambda}(\mathbf{k}_{\mathrm{r}, m, l}- \mathbf{k}_{\mathrm{t}, m, z})^T\mathbf{p}_m}\right.\\
\left.\times g_{m, l, z}\right)+2k_3\pi,~\forall m\in\mathcal{M}, l\in\mathcal{L}, z\in\mathcal{Z}, k_3\in\mathbb{Z}.
\end{multline}
\end{corollary}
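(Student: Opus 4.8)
The plan is to recast the maximization as a phase-alignment problem over the $M$ reflection coefficients. First I would group the terms so that $\vartheta_m$ factors out of the inner double sum: define the per-element aggregated coefficient
\[
c_m \triangleq \sum_{l=1}^{L}\sum_{z=1}^{Z} g_{m,l,z}\, f^{m,l,z}_\mathrm{r,t}(\theta,\phi)\, e^{j\frac{2\pi}{\lambda}(\mathbf{k}_{\mathrm{r},m,l}-\mathbf{k}_{\mathrm{t},m,z})^T\mathbf{p}_m},
\]
so that the useful-signal term collapses to $\sqrt{1/(LZ)}\sum_{m=1}^{M}\vartheta_m c_m$ and the received power reads $P_\mathrm{r}=\frac{1}{LZ}\bigl|\sum_{m=1}^{M}\vartheta_m c_m\bigr|^2+\sigma^2$. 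The noise term $\sigma^2$ is constant and can be set aside.

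Next I would invoke the passive-beamforming constraint that each reflection coefficient is unit-modulus, $|\vartheta_m|=1$, so that only the phase $\angle\vartheta_m$ is free. The triangle inequality then yields $\bigl|\sum_{m}\vartheta_m c_m\bigr|\le\sum_{m}|\vartheta_m c_m|=\sum_{m}|c_m|$, and squaring and scaling by $1/(LZ)$ gives exactly the claimed upper bound. This step also explains structurally why the inner sum over $l,z$ remains trapped inside the modulus: because a single $\vartheta_m$ multiplies the entire path-sum $c_m$, the phases of the individual $(l,z)$ contributions routed through element $m$ cannot be adjusted separately, in sharp contrast to the fine-grained per-path alignment available in Corollaries~\ref{coro2} and \ref{coro3}.

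Then I would confirm tightness. Equality in the triangle inequality holds if and only if all summands $\vartheta_m c_m$ carry a common phase; choosing $\angle\vartheta_m=-\angle c_m+2k_3\pi$ renders each $\vartheta_m c_m=|c_m|$ real and nonnegative, so $\sum_m\vartheta_m c_m=\sum_m|c_m|$ and the bound is attained. Substituting the definition of $c_m$ back in recovers the stated optimal-phase condition verbatim.

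The argument is essentially routine, so I do not anticipate a genuine obstacle; the only point requiring care is the index bookkeeping together with the explicit observation that $\vartheta_m$ is a common multiplicative factor across the inner double sum. I would state the unit-modulus assumption explicitly at the outset, since relaxing it (for instance, permitting amplitude control on $\vartheta_m$) would change both the bound and its optimizer.
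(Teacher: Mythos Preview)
Your proposal is correct and is precisely the standard phase-alignment argument that the paper relies on implicitly; the paper states Corollary~\ref{coro4} without a written proof, treating it as an immediate consequence of the expression for $P_\mathrm{r}$ in Case~4 together with the unit-modulus constraint $|\vartheta_m|=1$. Your decomposition via $c_m$, the triangle-inequality bound, and the equality condition $\angle\vartheta_m=-\angle c_m+2k_3\pi$ are exactly the steps one would supply to fill in that gap.
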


\begin{corollary}\label{coro5}
When the reflection coefficient and radiation pattern are fixed, in order to maximize the received power, the element position should satisfy
\begin{multline}
\frac{2\pi}{\lambda}(\mathbf{k}_{\mathrm{r}, m, l}- \mathbf{k}_{\mathrm{t}, m, z})^T\mathbf{p}_m\\
=-\angle g_{m, l, z}f^{m, l, z}_\mathrm{r, t}(\theta, \phi)\vartheta_m+2k_4\pi,\\
\forall m\in\mathcal{M}, l\in\mathcal{L}, z\in\mathcal{Z}, k_4\in\mathbb{Z}.
\end{multline}  
In this case, the received power is obtained as
\begin{equation}
P^\mathrm{opt}_\mathrm{r, pp} = \frac{1}{LZ} \left(\sum_{m=1}^{M} \sum_{l=1}^{L} \sum_{z=1}^{Z} \left|g_{m, l, z} f^{m, l, z}_\mathrm{r, t}(\theta, \phi)\right| \right)^2+\sigma^2.
\end{equation}
\end{corollary}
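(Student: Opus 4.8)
The plan is to reduce the claim to a single application of the triangle inequality for complex numbers, in exact analogy with the reasoning behind Corollary~\ref{coro2}.

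First, I would note that the noise power $\sigma^2$ is an additive constant that is independent of the element positions $\mathbf{p}_m$, so maximizing $P_\mathrm{r}$ over the positions is equivalent to maximizing $|S|$, where $S\triangleq\sum_{m=1}^{M}\sum_{l=1}^{L}\sum_{z=1}^{Z}a_{m,l,z}$,
\begin{equation*}
a_{m,l,z}\triangleq\vartheta_m g_{m,l,z}f^{m,l,z}_\mathrm{r,t}(\theta,\phi)\,e^{j\frac{2\pi}{\lambda}(\mathbf{k}_{\mathrm{r},m,l}-\mathbf{k}_{\mathrm{t},m,z})^T\mathbf{p}_m},
\end{equation*}
and $P_\mathrm{r}=\frac{1}{LZ}|S|^2+\sigma^2$. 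Since $\vartheta_m$ and the pattern/channel gains are held fixed, the positions enter only through the unit-modulus steering factors.

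Second, I would apply $|S|\le\sum_{m,l,z}|a_{m,l,z}|$. Because the steering exponential has unit modulus and the passive reflection coefficient obeys $|\vartheta_m|=1$, each $|a_{m,l,z}|=|g_{m,l,z}f^{m,l,z}_\mathrm{r,t}(\theta,\phi)|$, which immediately gives the upper bound $\frac{1}{LZ}(\sum_{m,l,z}|g_{m,l,z}f^{m,l,z}_\mathrm{r,t}(\theta,\phi)|)^2$ on the signal term, hence the claimed $P^\mathrm{opt}_\mathrm{r,pp}$. I would then invoke the equality condition: equality holds exactly when all nonzero $a_{m,l,z}$ share a common argument. As $\angle a_{m,l,z}=\angle(\vartheta_m g_{m,l,z}f^{m,l,z}_\mathrm{r,t}(\theta,\phi))+\frac{2\pi}{\lambda}(\mathbf{k}_{\mathrm{r},m,l}-\mathbf{k}_{\mathrm{t},m,z})^T\mathbf{p}_m$ and the first term is fixed, aligning every summand to the common phase $0$ requires precisely $\frac{2\pi}{\lambda}(\mathbf{k}_{\mathrm{r},m,l}-\mathbf{k}_{\mathrm{t},m,z})^T\mathbf{p}_m=-\angle(g_{m,l,z}f^{m,l,z}_\mathrm{r,t}(\theta,\phi)\vartheta_m)+2k_4\pi$ for all $m,l,z$, which is the stated condition. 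Under it each $a_{m,l,z}$ collapses to the nonnegative real number $|g_{m,l,z}f^{m,l,z}_\mathrm{r,t}(\theta,\phi)|$, so $S$ attains the bound.

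The main point I would flag — exactly as the text does after Corollary~\ref{coro2} — is feasibility rather than optimality. For each element there is only the single three-dimensional vector $\mathbf{p}_m$, yet the alignment condition imposes $LZ$ scalar equations, one per path pair $(l,z)$. This system is generically overdetermined, so the positions cannot simultaneously meet all conditions and $P^\mathrm{opt}_\mathrm{r,pp}$ must be read as a theoretical upper bound on the power of the position-reconfigurable scheme rather than an attainable value.
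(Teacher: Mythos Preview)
Your proposal is correct and matches the paper's own (implicit) reasoning: the paper states Corollary~\ref{coro5} without proof and merely remarks afterward that a feasible $\mathbf{p}_m$ is hard to find, so the expression is a theoretical upper bound, exactly as you argue via the triangle inequality and its equality condition.
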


Similar to the position reconfigurability scheme, it is challenging to find a feasible position $\mathbf{p}_m$, $m\in\mathcal{M}$ that allows the received power to reach its theoretical upper bound.

\begin{corollary}\label{coro6}
Similarly, the optimal received power considering the design of radiation pattern can be derived as
\begin{equation}
P^\mathrm{opt}_\mathrm{r, pr} =\frac{1}{LZ} \left( \sum_{m=1}^{M}\sum_{l=1}^{L} \sum_{z=1}^{Z} \left|g_{m, l, z} f^{m, l, z}_\mathrm{r, t}(\theta, \phi)\right| \right)^2+\sigma^2,
\end{equation}
which is also the theoretic upper bound of the received power. Also, the corresponding optimal pattern must satisfy
\begin{multline}
\angle f^{m, l, z}_\mathrm{r, t}(\theta, \phi)\\
= -\angle \left(\vartheta_m g_{m, l, z} e^{j\frac{2\pi}{\lambda}(\mathbf{k}_{\mathrm{r}, m, l}- \mathbf{k}_{\mathrm{t}, m, z})^T\mathbf{p}_m}\right)+2k_5\pi,\\
\forall m\in\mathcal{M}, l\in\mathcal{L}, z\in\mathcal{Z}, k_5\in\mathbb{Z}.
\end{multline}
\end{corollary}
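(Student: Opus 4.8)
The plan is to treat Corollary~\ref{coro6} as the multi-element counterpart of the argument behind Corollary~\ref{coro3}, reducing the whole claim to one application of the triangle inequality together with its equality condition. First I would peel off the noise term and write the signal part of the received power as $P_\mathrm{r}-\sigma^2=\frac{1}{LZ}\bigl|\sum_{m=1}^{M}\sum_{l=1}^{L}\sum_{z=1}^{Z}a_{m,l,z}\bigr|^2$, where the summand $a_{m,l,z}\triangleq\vartheta_m g_{m,l,z}f^{m,l,z}_\mathrm{r,t}(\theta,\phi)\,e^{j\frac{2\pi}{\lambda}(\mathbf{k}_{\mathrm{r},m,l}-\mathbf{k}_{\mathrm{t},m,z})^T\mathbf{p}_m}$ bundles the reflection coefficient, the cascaded gain, the pattern gain, and the position-dependent phase for the triple $(m,l,z)$. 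This rewriting makes explicit that the quantity to be maximized is the squared modulus of a single complex sum.

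Next I would invoke the triangle inequality, $\bigl|\sum_{m,l,z}a_{m,l,z}\bigr|\le\sum_{m,l,z}|a_{m,l,z}|$. Since $|\vartheta_m|=1$ and the exponential factor has unit modulus, $|a_{m,l,z}|=|g_{m,l,z}f^{m,l,z}_\mathrm{r,t}(\theta,\phi)|$, so squaring and reinstating $\sigma^2$ reproduces exactly the claimed expression for $P^\mathrm{opt}_\mathrm{r,pr}$, establishing it as an upper bound. The equality case of the triangle inequality requires all summands $a_{m,l,z}$ to be co-phased, i.e.\ to share a common argument. Because the only freely designable phase inside $a_{m,l,z}$ is that of the pattern gain, I would force $\angle f^{m,l,z}_\mathrm{r,t}(\theta,\phi)$ to cancel the argument of the remaining factors, which yields precisely $\angle f^{m,l,z}_\mathrm{r,t}(\theta,\phi)=-\angle\bigl(\vartheta_m g_{m,l,z}e^{j\frac{2\pi}{\lambda}(\mathbf{k}_{\mathrm{r},m,l}-\mathbf{k}_{\mathrm{t},m,z})^T\mathbf{p}_m}\bigr)+2k_5\pi$, with the integer multiple of $2\pi$ recording the usual phase ambiguity. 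Under this choice every $a_{m,l,z}$ collapses to a nonnegative real number, the inequality holds with equality, and the bound is attained.

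The step I expect to carry the real weight is arguing that this phase assignment is simultaneously meaningful across all $MLZ$ triples, which is the point that separates pattern reconfigurability from the position scheme of Corollary~\ref{coro5}. There, a single position vector $\mathbf{p}_m$ must satisfy $LZ$ distinct phase equations at once, an overdetermined system that generically admits no solution, so the position bound remains only theoretical. Here, by contrast, the pattern gain $f^{m,l,z}_\mathrm{r,t}(\theta,\phi)$ supplies an independent phase handle for each path pair $(l,z)$ per element—the element response is read off at a distinct angle pair for every path—so the co-phasing conditions decouple rather than clash. I would therefore close by emphasizing that it is exactly this per-triple phase freedom that lets the received power approach the triangle-inequality ceiling $P^\mathrm{opt}_\mathrm{r,pr}$ far more tightly than position reconfigurability can, which is the qualitative advantage the corollary is meant to formalize.
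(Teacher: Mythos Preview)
Your proposal is correct and matches the paper's (implicit) argument: the corollary is stated without a separate proof, being the direct multi-element analogue of Corollary~\ref{coro3}, and the route you describe---triangle inequality on the triple sum, unit-modulus factors stripping down $|a_{m,l,z}|$ to $|g_{m,l,z}f^{m,l,z}_\mathrm{r,t}(\theta,\phi)|$, then the co-phasing equality condition assigning the pattern phase---is exactly the intended derivation. Your closing paragraph on why the $MLZ$ phase conditions decouple under pattern control (but not under position control) is precisely the point the paper makes in the Remark following Corollary~\ref{coro6}.
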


\begin{remark}
The above analysis shows that in the single-path scenario, the received power can attain its theoretical upper bound when the reflection coefficients, element positions, and radiation patterns are independently and properly designed (as shown in Case 1 and Case 3). This observation confirms the functional equivalence among passive beamforming, position reconfigurability, and pattern reconfigurability in RIS-aided systems. In multi-path scenarios, nevertheless, passive beamforming appears to be limited to adjusting only the overall phase of the cascaded channel, which restricts its ability to enhance the received power. In contrast, position and pattern reconfigurabilities enable fine-grained phase control over individual multi-path components, offering greater DoFs for signal manipulation and thereby achieving improved theoretical performance. Furthermore, Corollaries \ref{coro2} and \ref{coro5} suggest that achieving the theoretical upper bound is challenging under the position-reconfigurable scheme due to the difficulty of finding the feasible positions to meet all the given constraints, while Corollaries \ref{coro3} and \ref{coro6} indicate that the pattern-reconfigurable scheme can readily achieve this bound. 

These findings highlight that pattern reconfigurability is sufficient to effectively replace both passive beamforming and position reconfigurability in RIS systems, demonstrating pattern-reconfigurable FRIS's remarkable potential in enhancing communication performance compared to the other two.\footnote{Pattern reconfigurability is also a more relatable concept in antennas since it connects more closely to the intrinsic properties of antenna.} Notably, combining pattern reconfigurability with passive beamforming offers no additional performance gains because pattern reconfigurability alone provides ample design flexibility for optimal signal alignment. Therefore, in the pattern-reconfigurable FRIS  considered in this paper, our design focus is solely placed on radiation pattern optimization.\footnote{In practical applications, achieving optimal performance through pattern reconfigurability remains challenging, primarily due to hardware limitations that restrict the number of available reconfigurable patterns. Therefore, it is necessary to jointly design the passive beamforming and radiation pattern to further enhance the system performance, which represents a promising future direction for the research on pattern reconfigurability-enabled FRIS.} It is worth noting that in the above example, we only explored the capability of radiation patterns to manipulate the phase of the transmitted signals. However, the potential of radiation patterns goes far beyond that. In fact, radiation patterns also possess the ability to focus signal energy, which means they can not only control the signal phase but also effectively regulate the amplitude distribution of the transmitted signals.
\end{remark}


\section{Multiuser System Model}\label{sec:S3}
In this sections, we will integrate the proposed pattern reconfigurability-enabled FRIS into wireless networks to validate its  potential in enhancing communication performance of wireless networks. Specifically, Fig.~\ref{fig:scenario} shows the considered communication system, which consists of a multi-antenna BS, where each antenna is fixed and assumed to have an isotropic radiation pattern (i.e., the gain is $1$ in all directions), an FRIS with $M$ pattern-reconfigurable elements (fluid elements) and $K$ single-fixed-antenna user equipment (UEs), each antenna with an isotropic radiation pattern. To investigate the potential of the pattern-reconfigurability-enabled FRIS, it is assumed that the direct links between the BS and the UEs are completely blocked, and all signals are transmitted to the UEs exclusively through the assistance of the FRIS.

\begin{figure}[]
\centering
\includegraphics[width=.65\columnwidth]{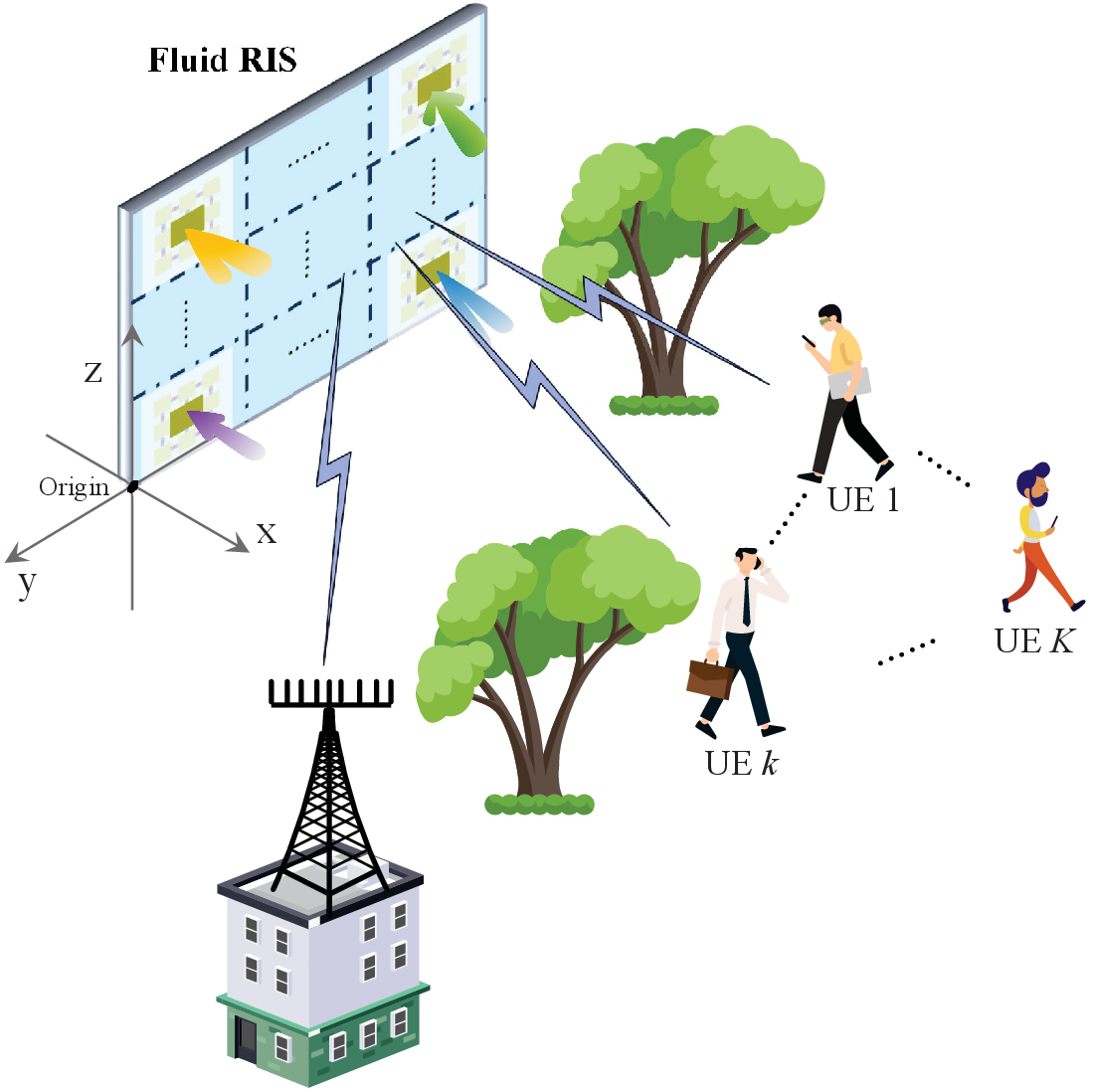}\\
\vspace{-2mm}
\caption{A downlink system with a pattern-reconfigurable FRIS.}\label{fig:scenario}
\vspace{-3mm}
\end{figure}

\subsection{Channel Model}
A multi-path channel model is used to represent the links between the FRIS and the BS/UEs. Specifically, the channel between the BS and the FRIS is written as
\begin{equation}
\mathbf{H}_{\mathrm{BR}}=\gamma_1\left(\mathbf{H}_{\mathrm{BR}}^\mathrm{LoS}+\mathbf{H}_{\mathrm{BR}}^\mathrm{NLoS}\right),
\end{equation}
where
\begin{equation}
\left\{\begin{aligned}
\mathbf{H}_{\mathrm{BR}}^\mathrm{LoS}&=\rho(d^\mathrm{LoS}_\mathrm{BR})g^\mathrm{LoS}_{\mathrm{br}}\mathbf{P}_\mathrm{RIS}\left(\theta_\mathrm{r}^\mathrm{LoS}, \phi_\mathrm{r}^\mathrm{LoS}\right)\mathbf{a}_\mathrm{R}\left(\theta_\mathrm{r}^\mathrm{LoS}, \phi_\mathrm{r}^\mathrm{LoS}\right)\\
&\times\mathbf{a}^H_\mathrm{B}\left(\theta_\mathrm{t, B}^\mathrm{LoS}, \phi_\mathrm{t, B}^\mathrm{LoS}\right),\\
\mathbf{H}_{\mathrm{BR}}^\mathrm{NLoS}&=\sum_{l=1}^{L}\rho(d^l_\mathrm{BR})g^l_{\mathrm{br}}\mathbf{P}_\mathrm{RIS}\left(\theta_\mathrm{r}^{l}, \phi_\mathrm{r}^{l}\right)\mathbf{a}_\mathrm{R}\left(\theta_\mathrm{r}^{l}, \phi_\mathrm{r}^{l}\right)\\
&\times\mathbf{a}^H_\mathrm{B}\left(\theta_\mathrm{t, B}^{l}, \phi_\mathrm{t, B}^{l}\right),
\end{aligned}\right.
\end{equation}
in which $\gamma_1=\sqrt{\frac{1}{L+1}}$ with $L$ denoting the number of multi-path components from the BS to the FRIS, $\rho(d_0)=\sqrt{\frac{\rho_0}{d_0^{\alpha}}}$ represents the large scale path-loss, which is a function with respect to (w.r.t.) the distance, $d_0$, where $\rho_0$ and $ \alpha$ denote the reference power gain at a distance of $1$ m and the path-loss exponent, respectively. Moreover, $\mathbf{P}_\mathrm{RIS}\left(\theta_\mathrm{r}^\zeta, \phi_\mathrm{r}^\zeta\right)\in \mathbb{C}^{M\times M}, ~\zeta\in\{\mathrm{LoS}, \{l\}_{l=1}^L\}$, denotes a diagonal matrix where $[\mathbf{P}_\mathrm{RIS}\left(\theta_\mathrm{r}^\zeta, \phi_\mathrm{r}^\zeta\right)]_{m, m}=p_m\left(\theta_\mathrm{r}^\zeta, \phi_\mathrm{r}^\zeta\right), ~m\in\mathcal{M}\triangleq\{1, \dots, M\},$ is the radiation pattern gain of the $m$-th fluid element in the direction $\left(\theta_\mathrm{r}^\zeta, \phi_\mathrm{r}^\zeta\right)$ with $\theta_\mathrm{r}^\zeta$ and $\phi_\mathrm{r}^\zeta$ representing the elevation and  azimuth angles of arrival (AoA) at the FRIS. Furthermore, $\theta_\mathrm{t, B}^{\zeta}$ and $\phi_\mathrm{t, B}^{\zeta}$ represent the elevation and  azimuth angles of departure (AoD) at the BS, $d_\mathrm{BR}^\zeta$ denotes the multi-path length, $g_\mathrm{br}^\zeta\sim\mathcal{CN}(0,1)$ is the complex channel gain. Also, 
\begin{equation}
\left\{\begin{aligned}\mathbf{a}_\mathrm{R}\left(\theta_\mathrm{r}^\zeta, \phi_\mathrm{r}^\zeta\right)&=e^{-j2\pi\frac{\mathbf{t}^y_\mathrm{R}d\sin\theta_\mathrm{r}^\zeta\sin\phi_\mathrm{r}^\zeta}{\lambda}}e^{-j2\pi\frac{\mathbf{t}^z_\mathrm{R}d\cos\theta_\mathrm{r}^\zeta}{\lambda}},\\
\mathbf{a}_\mathrm{B}\left(\theta_\mathrm{t, B}^\zeta, \phi_\mathrm{t, B}^\zeta\right)&=e^{-j2\pi\frac{\mathbf{t}_\mathrm{B}d\sin\theta_\mathrm{t, B}^\zeta\cos\phi_\mathrm{t, B}^\zeta}{\lambda}},
\end{aligned}\right.
\end{equation}
where $\mathbf{t}^y_\mathrm{R}=[0, 1, \dots, M_y-1]^T$, $\mathbf{t}^z_\mathrm{R}=[0, 1, \dots, M_z-1]^ T$, $\mathbf{t}_\mathrm{B}=[0, 1, \dots, N_\mathrm{t}-1]^T$, and $d=\frac{\lambda}{2}$ is assumed to be the distance between adjacent elements.  

Similarly, the channel between the FRIS and the $k$-th UE, i.e., $\mathbf{h}_{\mathrm{r}k}, ~k\in\mathcal{K}\triangleq\{1, \dots, K\}$, can be expressed as
\begin{equation}
\mathbf{h}_{\mathrm{r}k}=\gamma_2\left(\mathbf{h}_{\mathrm{r}k}^\mathrm{LoS}+\mathbf{h}_{\mathrm{r}k}^\mathrm{NLoS}\right),
\end{equation}
where
\begin{equation}
\left\{\begin{aligned}
\mathbf{h}_{\mathrm{r}k}^\mathrm{LoS}&=\rho(d^{\mathrm{LoS}}_{\mathrm{r}k})g_{\mathrm{r}k}^{\mathrm{LoS}}\mathbf{P}_\mathrm{RIS}\left(\theta_{\mathrm{t}, k}^{\mathrm{LoS}}, \phi_{\mathrm{t}, k}^{\mathrm{LoS}}\right)
\mathbf{a}_\mathrm{R}\left(\theta_{\mathrm{t}, k}^{\mathrm{LoS}}, \phi_{\mathrm{t}, k}^{\mathrm{LoS}}\right),\\
\mathbf{h}_{\mathrm{r}k}^\mathrm{NLoS}&=\sum_{z=1}^Z \rho(d^z_{\mathrm{r}k}) g_{\mathrm{r}k}^z~\mathbf{P}_\mathrm{RIS}\left(\theta_{\mathrm{t}, k}^{z}, \phi_{\mathrm{t}, k}^{z}\right)\mathbf{a}_\mathrm{R}\left(\theta_{\mathrm{t}, k}^{z}, \phi_{\mathrm{t}, k}^{z}\right),
\end{aligned}\right.
\end{equation}
in which $\gamma_2=\sqrt{\frac{1}{Z+1}}$ with $Z$ denoting the number of multi-path components from the FRIS to UE $k$,  $\theta_{\mathrm{t}, k}^{\widehat{\zeta}}$ and $\phi_{\mathrm{t}, k}^{\widehat{\zeta}}$, ${\widehat{\zeta}}\in\{\mathrm{LoS}, \{z\}_{z=1}^Z\}$, are the elevation and azimuth AoDs from the FRIS to the $k$-th UE, respectively, while $\theta_{\mathrm{r}, k}^{\widehat{\zeta}}$ and $\phi_{\mathrm{r}, k}^{\widehat{\zeta}}$ are the corresponding AoAs at UE $k$, $d_\mathrm{BR}^{\widehat{\zeta}}$ denotes the multi-path length, and $g_\mathrm{br}^{\widehat{\zeta}}\sim\mathcal{CN}(0,1)$ is the complex channel gain. 

With this model, we have the $k$-th UE's received signal as
\begin{equation}
y_k=\mathbf{h}_{\mathrm{r}k}^H\mathbf{H}_\mathrm{BR}\mathbf{w}_ks_k+\sum_{j\neq k}^{K}\mathbf{h}_{\mathrm{r}k}^H\mathbf{H}_\mathrm{BR}\mathbf{w}_js_j+n_k,
\end{equation}
where $s_k$ is the transmitted signal at the BS for the $k$-th UE, with $\mathbb{E}[|s_k|^2]=1$, $\mathbf{w}_k$ denotes the active beamforming vector of UE $k$ at the BS, and $n_k\sim(0, \sigma_k^2)$ is the AWGN. Thus, the achievable rate of the $k$-th UE is given by
\begin{equation}
R_k=\log_2\left(1+\frac{\left|\mathbf{h}_{\mathrm{r}k}^H\mathbf{H}_\mathrm{BR}\mathbf{w}_k\right|^2}{\sum_{j\neq k}^{K}\left|\mathbf{h}_{\mathrm{r}k}^H\mathbf{H}_\mathrm{BR}\mathbf{w}_j\right|^2+\sigma_k^2}\right).
\end{equation}

\begin{figure*}[!t]
\centering
\includegraphics[width=.7\linewidth]{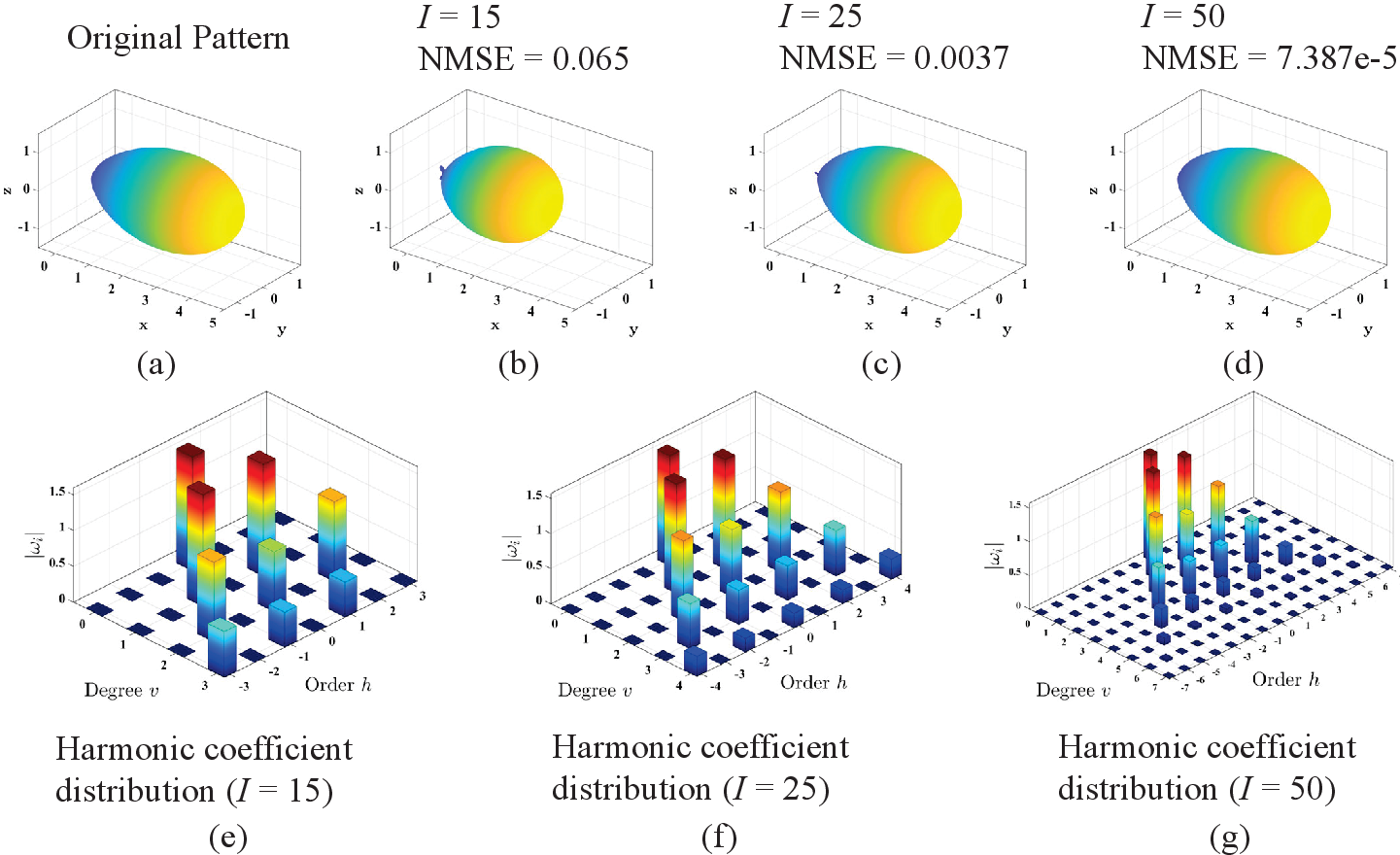}\\
\vspace{-2mm}
\caption{The illustration of a numerical example using the SHOD method for the decomposition of 3GPP 38.901 pattern considering different truncation length: (a) the original pattern of 3GPP 38.901; the SHOD approximation pattern (b) with $I=15$; (c) with $I=25$; (d) with $I=50$; the harmonic coefficient distribution (e) with $I=15$; (f) with $I=25$; (g) with $I=50$.}\label{fig:SHOD_method_verification}
\vspace{-4mm}
\end{figure*}

\subsection{SHOD for Radiation Pattern Reconfigurability}
The fact that there is no explicit analytical mapping between the azimuth and elevation angles and the radiation pattern gain in our model, presents a significant challenge in optimizing the three-dimensional (3D) radiation pattern of fluid elements. To overcome this, the SHOD method in \cite{costa2010unified, xu20163} is employed. Specifically, for any radiation pattern of the $m$-th element, it can be decomposed as 
\begin{equation}\label{eq_SHOD}
p_m\left(\theta, \phi\right)=\sum_{v=1}^{+\infty}\sum_{h=-v}^{v}\omega_{m, v, h}Y^h_{v}\left(\theta, \phi\right),
\end{equation}
where $\omega_{m, v, h}$ denotes the harmonic coefficient, and
\begin{equation}\label{eq_spherical_harmonic_basis}
Y^h_v(\theta, \phi)=\begin{cases}
(-1)^h N_v^hP^h_v(\cos\theta)e^{jh\phi}, &h\geq 0,\\
(-1)^h N_v^hP^{-h}_v(\cos\theta)e^{jh\phi}, &h< 0,
\end{cases}
\end{equation}
where $N_v^h= \sqrt{\frac{(2v+1)(v-|h|)!}{4\pi(v+|h|)!}}$ is the normalized factor, $P_v^h(x)$ is the associated Legendre function of degree $v$ and order $h$, which is expressed as 
$
	P_v^h(x)=(1-x)^{\frac{h}{2}}\frac{d^h P_v(x)}{dx^h},
$
with $ P_v(x)$ being the Legendre function \cite{costa2010unified}. Note that for any given $v$ and $h$, the spherical harmonic basis functions can be directly determined by \eqref{eq_spherical_harmonic_basis}. Therefore, with the assistance of the SHOD method, it is theoretically possible to synthesize any desired radiation pattern by carefully tuning the harmonic coefficients.
However, employing an infinite number of spherical harmonic basis functions for radiation pattern synthesis would lead to significant computational complexity in the design. To mitigate this issue, a truncation method is adopted in \eqref{eq_SHOD}. Specifically, the expansion is truncated to the first $I = V^2 + V + 1$ terms, meaning that only the first $I$ spherical harmonic basis functions are used to approximate the radiation patterns. Thus, the approximated pattern can be written as
\begin{align}\label{eq_SHOD_truncation}
p_m\left(\theta, \phi\right)\approx \tilde{p}_m\left(\theta, \phi\right)&=\sum_{v=0}^{V}\sum_{h=-v}^{v}\omega_{m, v, h}Y^h_{v}\left(\theta, \phi\right),\notag\\
&=\sum_{i=1}^{I}\omega_{m, i}Y_{i}\left(\theta, \phi\right)=\boldsymbol{\omega}_m^H\boldsymbol{\eta}\left(\theta, \phi\right),
\end{align}
where $\omega_{m, i}=\omega_{m, v, h}$ and $Y_{i}\left(\theta, \phi\right)=Y^h_{v}\left(\theta, \phi\right)$ for any $i=v^2+v+h+1, ~ i\in\mathcal{I}\triangleq\{1, 2,\dots, I\}$, $v\in[0, V]$, $h\in[-v, v]$, $\boldsymbol{\omega}_m=[\omega_{m, 1}, \dots, \omega_{m, I}]^H\in\mathbb{C}^{I\times 1}$, $\boldsymbol{\eta}\left(\theta, \phi\right)=[Y_{1}\left(\theta, \phi\right), \dots, Y_{I}\left(\theta, \phi\right)]^T$. Also, for $i, s\in\mathcal{I}$, we have
\begin{align}\label{eq_spherical_harmonic_orx}
	\iint Y_i\left(\theta, \phi\right)	Y_s\left(\theta, \phi\right)\sin\theta d\theta d\phi=\begin{cases}
		1, i=s,\\
		0, i\neq s.
	\end{cases}
\end{align}

Recall that the design of $\boldsymbol{\omega}_m$ directly determines the shape of the element's radiation pattern. For this reason, we will focus on optimizing these weight coefficients to achieve the desired antenna radiation patterns. It is worth noting that when designing the fluid elements' radiation patterns, an energy constraint should be imposed on the radiation pattern gain of each element. Specifically, this constraint is expressed as $\iint |\tilde{p}_m\left(\theta, \phi\right)|^2\sin\theta d\theta d\phi=4\pi$ \cite{zheng2025tri}. Utilizing the result in \eqref{eq_spherical_harmonic_orx}, the energy constraint can be equivalently rewritten as $\left\|\boldsymbol{\omega}_m\right\|^2_2=4\pi$. This transformation simplifies the design process by relating the constraint directly to the weight coefficients, facilitating more efficient optimization. To validate the effectiveness of the SHOD method, we reconstruct the 3GPP 38.901 radiation pattern as specified in \cite{3GPP} and illustrate the synthesized 3GPP 38.901 patterns under various truncation lengths $I$, along with the corresponding harmonic coefficients in Fig.~\ref{fig:SHOD_method_verification}. Note that the normalized mean square error (NMSE) is employed to evaluate the reconstruction accuracy of the patterns. According to the obtained results, it is observed that the NMSE consistently decreases with an increasing truncation length $I$, indicating improved reconstruction accuracy. Furthermore, the harmonic coefficient distributions indicate that the majority of the pattern energy is concentrated in the low-frequency components (i.e., small $v$), which demonstrates that a reasonable truncation operation can effectively approximate the original radiation pattern. As a consequence, the channel links between the FRIS and the BS/UE $k$, i.e., $\mathbf{H}_\mathrm{BR}$ and $\mathbf{h}_{\mathrm{r}k}$, can be re-expressed, respectively, as 
\begin{equation}\label{eq_channel_rewritten}
\mathbf{H}_{\mathrm{BR}}=\boldsymbol{\Omega}^H\mathbf{A},~\mbox{and}~\mathbf{h}_{\mathrm{r}k}=\boldsymbol{\Omega}^H\mathbf{b}_k,
\end{equation}
in which $\mathbf{A}=\mathbf{A}^\mathrm{LoS}+\mathbf{A}^\mathrm{NLoS}$, $\mathbf{b}_k=\mathbf{b}_k^\mathrm{LoS}+\mathbf{b}_k^\mathrm{NLoS}$, $\boldsymbol{\Omega}=\operatorname{Blkdiag}\{\boldsymbol{\omega}_1, \dots, \boldsymbol{\omega}_m, \dots, \boldsymbol{\omega}_M\}$, and 
\begin{equation}
\left\{\begin{aligned}
\mathbf{A}^\mathrm{LoS}&=g^\mathrm{LoS}_{\mathrm{br}}\gamma_1\rho(d^\mathrm{LoS}_\mathrm{BR})\boldsymbol{\Upsilon}\left(\theta_\mathrm{r}^\mathrm{LoS}, \phi_\mathrm{r}^\mathrm{LoS}\right)\mathbf{a}_\mathrm{R}\left(\theta_\mathrm{r}^\mathrm{LoS}, \phi_\mathrm{r}^\mathrm{LoS}\right)\\
&\times\mathbf{a}^H_\mathrm{B}\left(\theta_\mathrm{t, B}^\mathrm{LoS}, \phi_\mathrm{t, B}^\mathrm{LoS}\right)~\mbox{with}~\boldsymbol{\Upsilon}=\mathbf{I}_{M}\otimes\boldsymbol{\eta},\\
\mathbf{A}^\mathrm{NLoS}&=\gamma_1\sum_{l=1}^{L}\rho(d^l_\mathrm{BR})g^l_{\mathrm{br}}\boldsymbol{\Upsilon}\left(\theta_\mathrm{r}^{l}, \phi_\mathrm{r}^{l}\right)\mathbf{a}_\mathrm{R}\left(\theta_\mathrm{r}^{l}, \phi_\mathrm{r}^{l}\right)\\
&\times\mathbf{a}^H_\mathrm{B}\left(\theta_\mathrm{t, B}^{l}, \phi_\mathrm{t, B}^{l}\right),\\
\mathbf{b}_k^\mathrm{LoS}&=\gamma_2g_{\mathrm{r}k}^{\mathrm{LoS}}\rho(d^{\mathrm{LoS}}_{\mathrm{r}k})\boldsymbol{\Upsilon}\left(\theta_{\mathrm{t}, k}^{\mathrm{LoS}}, \phi_{\mathrm{t}, k}^{\mathrm{LoS}}\right)\mathbf{a}_\mathrm{R}\left(\theta_{\mathrm{t}, k}^{\mathrm{LoS}}, \phi_{\mathrm{t}, k}^{\mathrm{LoS}}\right),\\
\mathbf{b}_k^\mathrm{NLoS}&=\gamma_2\sum_{z=1}^Z \rho(d^z_{\mathrm{r}k})g_{\mathrm{r}k}^z\boldsymbol{\Upsilon}\left(\theta_{\mathrm{t}, k}^{z}, \phi_{\mathrm{t}, k}^{z}\right)\mathbf{a}_\mathrm{R}\left(\theta_{\mathrm{t}, k}^{z}, \phi_{\mathrm{t}, k}^{z}\right).
\end{aligned}\right.
\end{equation}

\vspace{-2mm}
\section{Problem Formulation and Algorithm Design}\label{sec:S4}
\subsection{Problem Formulation}
Under the multiuser communication scenario, we formulate an optimization problem aimed at maximizing the weighted sum rate of the UEs. This formulation takes into account the power budget at the BS, and the energy constraints of the fluid elements' radiation patterns. That is,
\begin{subequations}\label{eq_orig_opti}
	\begin{align}
		&\max _{\boldsymbol{\Upsilon}}~~~ \sum_{k=1}^{K}\epsilon_k R_k,\notag \\
		&\quad\text { s.t. }\sum_{k=1}^{K}\left\|\mathbf{w}_k\right\|_2^2\leq P_\mathrm{tmax}, \label{eq_orig_opti_1}\\
		&\quad\qquad \left\|\boldsymbol{\omega}_m\right\|_2^2=4\pi, \forall m\in\mathcal{M}, \label{eq_orig_opti_2}
	\end{align}
\end{subequations} 
where $\boldsymbol{\Upsilon}\triangleq\{\{\boldsymbol{\omega}_m\}_{m=1}^M, \{\mathbf{w}_k\}_{k=1}^K\}$, $\epsilon_k$ represents the weight coefficient of the $k$-th UE, playing a crucial role in determining user priority and fairness in communication systems, $P_\mathrm{tmax}$ is the power budget at the BS. Note that the constraints in \eqref{eq_orig_opti_2} are the equivalent forms of the patterns' energy constraints. Addressing this optimization problem is challenging due to the non-concave objective function and the equality constraints imposed by the harmonic coefficients. To tackle this issue, an alternating optimization strategy is adopted to decompose the original problem into two subproblems: the active beamforming subproblem and the pattern design subproblem. These two subproblems are iteratively solved using the MMSE method and the RCG method, respectively. The following subsection details the design process of the proposed algorithm.

\subsection{Algorithm Design}
\subsubsection{Active Beamforming Design} Given harmonic coefficients, the problem can be simplified to
\begin{equation}\label{eq_active_beamforming_sub}
\max _{\{\mathbf{w}_k\}_{k=1}^K}~ \sum_{k=1}^{K}\epsilon_k R_k~~\mbox{s.t.}~~\eqref{eq_orig_opti_1}. 
\end{equation}
Solving this subproblem directly is still difficult due to the non-concavity of the objective function w.r.t.~the active beamforming variables $\{\mathbf{w}_k\}_{k=1}^K$. To overcome this challenge, the MMSE method is adopted to transform $R_k$ into
\begin{equation}
R_k=\max_{D_k, u_k}\underbrace{\frac{\ln(D_k)-D_kE_K(u_k, \{\mathbf{w}_k\}_{k=1}^K)+1}{\ln2}}_{\tilde{R}_k},
\end{equation}
where 
\begin{multline}
E_k=u_k^H\left(\sum_{j\neq k}^{K}|\mathbf{h}_{\mathrm{r}k}^H\mathbf{H}_\mathrm{BR}\mathbf{w}_j|^2+\sigma_k^2\right)u_k\\
+\left(u_k^H\mathbf{h}_{\mathrm{r}k}^H\mathbf{H}_\mathrm{BR}\mathbf{w}_k-1\right)\left(u_k^H\mathbf{h}_{\mathrm{r}k}^H\mathbf{H}_\mathrm{BR}\mathbf{w}_k-1\right)^H.
\end{multline}

Additionally, it can be readily shown that $\tilde{R}_k$ is a concave function w.r.t.~the variables $D_k$ and $u_k$. Therefore, the optimal solution can be obtained by equating the partial derivatives of $R_k$ w.r.t.~$D_k$ and $u_k$ to zero, which gives
\begin{equation}\label{eq_auxiliary}
\left\{\begin{aligned}
D_k^\mathrm{opt}&=E_k^{-1},\\
u_k^\mathrm{opt}&=\frac{\mathbf{h}_{\mathrm{r}k}^H\mathbf{H}_\mathrm{BR}\mathbf{w}_k}{\sum_{j=1}^{K}|\mathbf{h}_{\mathrm{r}k}^H\mathbf{H}_\mathrm{BR}\mathbf{w}_j|^2+\sigma_k^2}.
\end{aligned}\right.
\end{equation}

In the $(t+1)$-th iteration, the concave lower bound of $R_k$ can be obtained with the given $D_k^{(t)}$ and $u_k^{(t)}$, i.e., $\tilde{R}_k(D^{(t)}_k,u^{(t)} )$. In this case, in the $(t+1)$-th iteration,  the subproblem \eqref{eq_active_beamforming_sub} can be transformed into
\begin{equation}\label{eq_active_beamforming_sub_trans}
\max _{\{\mathbf{w}_k\}_{k=1}^K} \sum_{k=1}^{K}\sum_{j=1}^{K}-\mathbf{w}_j^H\mathbf{F}_k\mathbf{w}_j+2\operatorname{R}\left(\mathbf{f}_k^H\mathbf{w}_k\right)+G_k~~\mbox{s.t.}~~\eqref{eq_orig_opti_1}, 
\end{equation}
where $\mathbf{f}_k=\frac{\epsilon_k D_k^{(t)}}{\ln2}\mathbf{H}_\mathrm{BR}^H\mathbf{h}_{\mathrm{r}k}u^{(t)}_k$, $\mathbf{F}_k=\frac{\mathbf{f}_k\mathbf{f}^H_k\ln2 }{\epsilon_k D_k^{(t)}}$, and $G_k=\epsilon_k\frac{\ln D_k^{(t)}-D_k^{(t)}\left(1+\sigma_k^2|u_k^{(t)}|^2\right)}{\ln2}$.

Note that \eqref{eq_active_beamforming_sub_trans} is convex and can be solved using standard solvers such as CVX. However, employing such solvers often incurs high computational complexity. To mitigate this issue, we instead adopt the Lagrange multiplier method as a more efficient alternative. Specifically, the Lagrange function associated with the optimization problem above is given by
\begin{multline}\label{eq_Larange}
\mathscr{L}=\sum_{k=1}^{K}\sum_{j=1}^{K}\mathbf{w}_j^H\mathbf{F}_k\mathbf{w}_j-2\operatorname{R}\left(\mathbf{f}_k^H\mathbf{w}_k\right)-G_k\\
+\upsilon\left(\sum_{k=1}^{K}\left\|\mathbf{w}_k\right\|_2^2-P_\mathrm{tmax}\right),
\end{multline}
where $\upsilon\geq 0$ denotes the Lagrange multiplier (dual variable). Thus, the partial derivative of the Lagrange function w.r.t.~active beamforming can be derived as
\begin{equation}
\frac{\partial \mathscr{L}}{\partial \mathbf{w}_k}=\sum_{j=1}^{K}2\mathbf{F}_j\mathbf{w}_k-2\mathbf{f}_k+2\upsilon\mathbf{w}_k, k\in\mathcal{K}.
\end{equation}
Letting $\frac{\partial \mathscr{L}}{\partial \mathbf{w}_k}=0$, the optimal $\mathbf{w}_k$ associated with the corresponding dual variable can be obtained as \begin{equation}\label{eq_active_beamforming_opt_dual}
\mathbf{w}^\mathrm{opt}_k(\upsilon)=(\mathbf{F}+\upsilon\mathbf{I}_{N_\mathrm{t}\times N_\mathrm{t}})^{\dagger}\mathbf{f}_k,
\end{equation}
where $\mathbf{F}=\sum_{j=1}^{K}\mathbf{F}_j$.

It is important to note that \eqref{eq_active_beamforming_sub_trans} satisfies Slater's condition, which ensures strong duality. Hence, the optimal solution to \eqref{eq_active_beamforming_sub_trans} can be obtained by determining the optimal dual variable. In particular, the optimal dual variable $\upsilon^\mathrm{opt}$ must satisfy the complementary slackness condition, given by
\begin{equation}
\upsilon^\mathrm{opt}\left(\sum_{k=1}^{K}\left\|\mathbf{w}^\mathrm{opt}_k(\upsilon^\mathrm{opt})\right\|_2^2-P_\mathrm{tmax}\right)=0.
\end{equation}

If $\sum_{k=1}^{K}\left\|\mathbf{w}^\mathrm{opt}_k(\upsilon^\mathrm{opt})\right\|_2^2< P_\mathrm{tmax}$ holds, $\upsilon^\mathrm{opt}=0$ and the optimal solution of \eqref{eq_active_beamforming_sub_trans} is $\mathbf{w}^\mathrm{opt}_k=\mathbf{F}^{\dagger}\mathbf{f}_k$. Otherwise, 
\begin{equation}\label{eq_dual_variable_solve}
P(\upsilon^\mathrm{opt})=\sum_{k=1}^{K}\left\|\mathbf{w}^\mathrm{opt}_k(\upsilon^\mathrm{opt})\right\|_2^2=P_\mathrm{tmax}.
\end{equation}
In fact, solving \eqref{eq_dual_variable_solve} is challenging due to the complicated form of $\mathbf{w}^\mathrm{opt}_k$. Moreover, the uncertainty regarding the monotonicity of the function $P(\upsilon)$ makes it difficult to apply conventional numerical methods for solving the equation. We address this issue by the following theorem.

\begin{theorem}\label{th_1}
$P(\upsilon)$ is monotonic-decreasing w.r.t.~$\upsilon$.
\end{theorem}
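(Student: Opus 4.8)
The plan is to rewrite $P(\upsilon)$ in a form whose dependence on $\upsilon$ is completely transparent, and then read off monotonicity either term by term or by differentiating once. First I would establish that $\mathbf{F}=\sum_{j=1}^{K}\mathbf{F}_j$ is Hermitian positive semidefinite. Each $\mathbf{F}_j=\frac{\ln 2}{\epsilon_j D_j^{(t)}}\mathbf{f}_j\mathbf{f}_j^H$ is a rank-one Hermitian matrix scaled by the strictly positive factor $\frac{\ln 2}{\epsilon_j D_j^{(t)}}$, since the weights satisfy $\epsilon_j>0$ and the auxiliary MMSE variable satisfies $D_j^{(t)}=E_j^{-1}>0$ (as $E_j$ is a mean-square error, hence strictly positive). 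Therefore $\mathbf{F}\succeq\mathbf{0}$, and for every $\upsilon>0$ the matrix $\mathbf{F}+\upsilon\mathbf{I}_{N_\mathrm{t}\times N_\mathrm{t}}$ is Hermitian positive definite and invertible, so the pseudoinverse in \eqref{eq_active_beamforming_opt_dual} coincides with the ordinary inverse, giving $\mathbf{w}_k^\mathrm{opt}(\upsilon)=(\mathbf{F}+\upsilon\mathbf{I}_{N_\mathrm{t}\times N_\mathrm{t}})^{-1}\mathbf{f}_k$.

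Next, I would exploit the Hermitian symmetry of the inverse to write
\[
P(\upsilon)=\sum_{k=1}^{K}\left\|\mathbf{w}_k^\mathrm{opt}(\upsilon)\right\|_2^2=\sum_{k=1}^{K}\mathbf{f}_k^H\left(\mathbf{F}+\upsilon\mathbf{I}_{N_\mathrm{t}\times N_\mathrm{t}}\right)^{-2}\mathbf{f}_k.
\]
Introducing the eigendecomposition $\mathbf{F}=\mathbf{U}\boldsymbol{\Lambda}\mathbf{U}^H$, with $\mathbf{U}$ unitary and $\boldsymbol{\Lambda}=\operatorname{Diag}([\lambda_1,\dots,\lambda_{N_\mathrm{t}}]^T)$, $\lambda_i\geq 0$, and setting $\tilde{\mathbf{f}}_k=\mathbf{U}^H\mathbf{f}_k$, the expression decouples into a sum of scalar terms,
\[
P(\upsilon)=\sum_{k=1}^{K}\sum_{i=1}^{N_\mathrm{t}}\frac{\left|[\tilde{\mathbf{f}}_k]_i\right|^2}{(\lambda_i+\upsilon)^2}.
\]
Since $\lambda_i\geq 0$, each denominator $(\lambda_i+\upsilon)^2$ is strictly increasing in $\upsilon$ for $\upsilon>0$, so every summand is non-increasing and $P(\upsilon)$ is monotonically decreasing. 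Equivalently, differentiating directly via $\frac{d}{d\upsilon}(\mathbf{F}+\upsilon\mathbf{I}_{N_\mathrm{t}\times N_\mathrm{t}})^{-2}=-2(\mathbf{F}+\upsilon\mathbf{I}_{N_\mathrm{t}\times N_\mathrm{t}})^{-3}$ yields $P'(\upsilon)=-2\sum_{k=1}^{K}\mathbf{f}_k^H(\mathbf{F}+\upsilon\mathbf{I}_{N_\mathrm{t}\times N_\mathrm{t}})^{-3}\mathbf{f}_k\leq 0$, because the inverse cube of a positive-definite matrix is itself positive definite. Strict monotonicity holds whenever the $\tilde{\mathbf{f}}_k$ are not all zero, which is the generic case.

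The argument is essentially routine once the quadratic-form structure is exposed; the only point requiring genuine care is the justification that $\mathbf{F}\succeq\mathbf{0}$, which I expect to be the crux. This is what simultaneously guarantees that $\mathbf{F}+\upsilon\mathbf{I}_{N_\mathrm{t}\times N_\mathrm{t}}$ is actually invertible (so the pseudoinverse reduces to an inverse and $P(\upsilon)$ is well defined and smooth for $\upsilon>0$) and that the eigenvalues entering the denominators are non-negative, so that each scalar term is genuinely decreasing rather than monotone of unknown sign. I would therefore verify carefully that $\epsilon_j>0$ and $D_j^{(t)}>0$ hold throughout the iterations, since the positivity of these scaling factors is precisely what makes $\mathbf{F}$ a sum of positive-semidefinite rank-one terms.
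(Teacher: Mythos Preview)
Your proof is correct, but it takes a genuinely different route from the paper's. The paper never computes $P(\upsilon)$ explicitly or invokes the eigendecomposition of $\mathbf{F}$; instead it uses a two-point swapping argument based purely on the optimality of $\{\mathbf{w}_k^{\mathrm{opt}}(\upsilon)\}_{k=1}^K$ as the minimizer of the Lagrangian $\mathscr{L}(\cdot,\upsilon)$ in \eqref{eq_Larange}. For $\upsilon_1>\upsilon_2\geq 0$ one writes the two optimality inequalities $\mathscr{L}(\{\mathbf{w}_k^{\mathrm{opt}}(\upsilon_1)\},\upsilon_1)\leq\mathscr{L}(\{\mathbf{w}_k^{\mathrm{opt}}(\upsilon_2)\},\upsilon_1)$ and $\mathscr{L}(\{\mathbf{w}_k^{\mathrm{opt}}(\upsilon_2)\},\upsilon_2)\leq\mathscr{L}(\{\mathbf{w}_k^{\mathrm{opt}}(\upsilon_1)\},\upsilon_2)$; adding them, the non-penalty parts cancel and one is left with $(\upsilon_1-\upsilon_2)P(\upsilon_1)\leq(\upsilon_1-\upsilon_2)P(\upsilon_2)$, hence $P(\upsilon_1)\leq P(\upsilon_2)$. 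That argument is more abstract: it requires no structural knowledge of $\mathbf{F}$ whatsoever, only that the Lagrangian has a minimizer for each $\upsilon$, and it would apply verbatim to any problem whose penalty is linear in the multiplier. Your approach, by contrast, leverages the explicit closed form \eqref{eq_active_beamforming_opt_dual} and the positive-semidefiniteness of $\mathbf{F}$; this buys you an explicit scalar formula for $P(\upsilon)$, differentiability, and strict monotonicity whenever the $\mathbf{f}_k$ are not all zero, at the cost of having to verify $\mathbf{F}\succeq\mathbf{0}$---a step the paper's swapping argument bypasses entirely.
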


\begin{proof}
Firstly, we define two variables $\upsilon_1$ and $\upsilon_2$ and  $\upsilon_1>\upsilon_2\geq 0$. Then we can obtain the optimal active beamforming for \eqref{eq_Larange} associated with $\upsilon_1$ and $\upsilon_2$, i.e., $\{\mathbf{w}^\mathrm{opt}_k(\upsilon_1)\}_{k=1}^K$ and $\{\mathbf{w}^\mathrm{opt}_k(\upsilon_2)\}_{k=1}^K$. For these two solutions, we have 
\begin{align}
&\mathscr{L}(\{\mathbf{w}^\mathrm{opt}_k(\upsilon_1)\}_{k=1}^K, \upsilon_1)\leq\mathscr{L}(\{\mathbf{w}^\mathrm{opt}_k(\upsilon_2)\}_{k=1}^K, \upsilon_1),\\
&\mathscr{L}(\{\mathbf{w}^\mathrm{opt}_k(\upsilon_2)\}_{k=1}^K, \upsilon_2)\leq\mathscr{L}(\{\mathbf{w}^\mathrm{opt}_k(\upsilon_1)\}_{k=1}^K, \upsilon_2),
\end{align}
By adding these two inequalities, we can derive the condition
$
(\upsilon_1-\upsilon_2)P(\upsilon_1)\leq(\upsilon_1-\upsilon_2)P(\upsilon_2). 
$
Considering $\upsilon_1>\upsilon_2$, we have $P(\upsilon_1)\leq P(\upsilon_2)$. Therefore, we can demonstrate that $P(\upsilon)$ is a monotonically decreasing function w.r.t.~$\upsilon$.
\end{proof}

\vspace{-2mm}
\begin{center}
\begin{small}
\begin{tabular}{p{8.5cm}}
\toprule[1pt]
\textbf{Algorithm 1:}  Algorithm Based on Bisection Search Method for Solving Active Beamforming Subproblem \eqref{eq_active_beamforming_sub_trans}   \\
\midrule[.5pt]
1: Initialize the tolerance accuracy $\epsilon$ and the bound of the\\
\quad dual variable $\upsilon_\mathrm{up}$ and $\upsilon_\mathrm{low}$.\\
2: If condition $\sum_{k=1}^{K}\left\|\mathbf{w}^\mathrm{opt}_k(\upsilon^\mathrm{opt})\right\|_2^2< P_\mathrm{tmax}$ holds,  $\mu^\mathrm{opt}$\\\quad $=0$ and $\mathbf{w}^\mathrm{opt}_k=\mathbf{F}^{\dagger}\mathbf{f}_k$. Otherwise, go to Steps 3-6 to \\\quad solve $\upsilon^\mathrm{opt}$ in \eqref{eq_dual_variable_solve}.\\
3: \textbf{Repeat} \\
4: \quad Calculate $\upsilon^\mathrm{opt}=\frac{\upsilon_\mathrm{up}+\upsilon_\mathrm{low}}{2}$ and $P(\upsilon^\mathrm{opt})$.\\
5: \quad if $P(\upsilon^\mathrm{opt})\geq P_\mathrm{tmax}$, $ \upsilon_\mathrm{low}\leftarrow\upsilon^\mathrm{opt}$. Otherwise, $\upsilon_\mathrm{up}\leftarrow$\\\qquad $\upsilon^\mathrm{opt}$. \\
6: \textbf{Until} $|\upsilon_\mathrm{up}-\upsilon_\mathrm{low}|\leq \epsilon$\\
7: \textbf{Output:} the optimal dual variable $\upsilon^\mathrm{opt}$ and the optimal\\\quad active beamforming $\mathbf{w}^\mathrm{opt}_k(\upsilon^\mathrm{opt})$ through \eqref{eq_active_beamforming_opt_dual}. \\
\bottomrule[1pt]
\end{tabular}
\end{small}
\end{center}

According to the result in Theorem \ref{th_1}, the bisection search method can be employed to solve \eqref{eq_dual_variable_solve}, and obtain the unique optimal solution. Subsequently, the optimal active beamforming can be achieved using \eqref{eq_active_beamforming_opt_dual}.  Algorithm 1 summarizes the detailed solving process for the problem \eqref{eq_active_beamforming_sub_trans}.

\subsubsection{Pattern Design} After getting the active beamforming, we turn our attention to the design of the harmonic coefficients $\boldsymbol{\omega}_m, ~m\in\mathcal{M}$, which are related to the pattern configuration. Specifically, the subproblem corresponding to the pattern design can be formulated as
\begin{align}\label{eq_pattern_design_trans}
&\max _{\boldsymbol{\Omega}}~~ f(\boldsymbol{\Omega})=\sum_{k=1}^{K}\epsilon_k R_k,\notag \\
&~\text { s.t. }~\left[\boldsymbol{\Omega}^H\boldsymbol{\Omega}\right]_{m,m}=4\pi,~m\in\mathcal{M},
\end{align}
with the rewritten form of $\mathbf{H}_\mathrm{BR}$ and $\mathbf{h}_{\mathrm{r}k}$ in \eqref{eq_channel_rewritten}. Moreover, $R_k$ can be re-expressed as	
\begin{multline}
R_k=\log_2\left(\operatorname{Tr}\left(\boldsymbol{\Omega}\boldsymbol{\Omega}^H\widehat{\mathbf{A}}\boldsymbol{\Omega}\boldsymbol{\Omega}^H\mathbf{B}_k\right)+\sigma_k^2\right)\\
-\log_2\left(\operatorname{Tr}\left(\boldsymbol{\Omega}\boldsymbol{\Omega}^H\tilde{\mathbf{A}}_k\boldsymbol{\Omega}\boldsymbol{\Omega}^H\mathbf{B}_k\right)+\sigma_k^2\right),
\end{multline}
in which we have defined $\widehat{\mathbf{A}}=\mathbf{A}\sum_{j=1}^{K}\mathbf{w}_j\mathbf{w}_j^H\mathbf{A}^H$, $\mathbf{B}_k=\mathbf{b}_k\mathbf{b}_k^H$, and $\tilde{\mathbf{A}}_k=\mathbf{A}\sum_{j\neq k}^{K}\mathbf{w}_j\mathbf{w}_j^H\mathbf{A}^H$.

Note that the existence of the equality constraints in \eqref{eq_pattern_design_trans} characterizes the feasible set of \eqref{eq_pattern_design_trans} as an oblique manifold space, given by $\mathscr{O} \triangleq \{ \boldsymbol{\Omega} \in \mathbb{C}^{IM \times M} \mid [\boldsymbol{\Omega}^H \boldsymbol{\Omega}]_{m,m} = 4\pi,~ m \in \mathcal{M}\}$. To address this, we adopt the RCG algorithm \cite{xiao2025multi} for efficiently obtaining a sub-optimal solution of $\boldsymbol{\Omega}$. In this algorithm, the Riemannian gradient is first computed by orthogonally projecting the Euclidean gradient onto the tangent space of the manifold, defined as $\mathcal{T}_{\mathscr{O}} \triangleq \{ \mathbf{T} \in \mathbb{C}^{IM \times M} \mid [\mathbf{T}^H \boldsymbol{\Omega}]_{m, m} = 0, m \in \mathcal{M} \}$. The Riemannian gradient is used to determine the direction of steepest descent on the manifold for updating the next iterate. Since the updated point resides in the tangent space $\mathcal{T}_{\mathscr{O}}$, a retraction operation is applied to map the point back onto the manifold. Specifically, in the $(e+1)$-th iteration, the Riemannian gradient of the objective function w.r.t.~$\boldsymbol{\Omega}$ can be derived as
\begin{align}\label{eq_Riemannian_gradient}
\operatorname{Rgrad}f^{(e+1)}&=\mathscr{M}_{\mathcal{T}_{\mathscr{O}}}\left(\nabla f\left(\boldsymbol{\Omega}^{(e)}\right)\right)\notag\\
&=\nabla f\left(\boldsymbol{\Omega}^{(e)}\right)-\frac{\boldsymbol{\Omega}^{(e)}}{4\pi}\operatorname{Ddiag}\left(\mathbf{O}^{(e)}\right),
\end{align}
where $\mathbf{O}^{(e)}=\operatorname{R}\Big(\left(\boldsymbol{\Omega}^{(e)}\right)^H\nabla f\left(\boldsymbol{\Omega}^{(e)}\right)\Big)$, $\mathscr{M}_{\mathcal{T}_{\mathscr{O}}}(\cdot)$ denotes the projecting operator, and $\nabla f\left(\boldsymbol{\Omega}\right)$ is the gradient of the objective function w.r.t.~$\boldsymbol{\Omega}$ in the Euclidean space, given as
\begin{equation}\label{eq_Euclidean_gradient}
\nabla f\left(\boldsymbol{\Omega}\right)=\sum_{k=1}^{K}\frac{2\epsilon_k}{\ln2}\left(\boldsymbol{\Upsilon}_{1, k}-\boldsymbol{\Upsilon}_{2,k}\right)\circ\mathbf{M},
\end{equation}
where $\mathbf{M}=\mathbf{I}_M\otimes \mathbf{1}_{I\times 1}$ denotes the mask block diagonal matrix, $\boldsymbol{\Upsilon}_{1, k}$ and $\boldsymbol{\Upsilon}_{2, k}$ are, respectively, given by
\begin{equation}
\left\{\begin{aligned}
\boldsymbol{\Upsilon}_{1,k}&=\frac{\widehat{\mathbf{A}}\boldsymbol{\Omega}\boldsymbol{\Omega}^H\mathbf{B}_k\boldsymbol{\Omega}+\mathbf{B}_k\boldsymbol{\Omega}\boldsymbol{\Omega}^H\widehat{\mathbf{A}}\boldsymbol{\Omega}}{\operatorname{Tr}\left(\boldsymbol{\Omega}\boldsymbol{\Omega}^H\widehat{\mathbf{A}}\boldsymbol{\Omega}\boldsymbol{\Omega}^H\mathbf{B}_k\right)+\sigma_k^2}\\
&=\frac{\mathbf{A}\mathbf{W}\mathbf{H}^H_\mathrm{BR}\mathbf{h}_{\mathrm{r}k}\mathbf{h}^H_{\mathrm{r}k}+\mathbf{b}_k\mathbf{h}^H_{\mathrm{r}k}\mathbf{H}_\mathrm{BR}\mathbf{W}\mathbf{H}^H_\mathrm{BR}}{\sum_{j=1}^{K}\left|\mathbf{h}_{\mathrm{r}k}^H\mathbf{H}_\mathrm{BR}\mathbf{w}_j\right|^2+\sigma_k^2},\\
\boldsymbol{\Upsilon}_{2,k}&=\frac{\tilde{\mathbf{A}}_k\boldsymbol{\Omega}\boldsymbol{\Omega}^H\mathbf{B}_k\boldsymbol{\Omega}+\mathbf{B}_k\boldsymbol{\Omega}\boldsymbol{\Omega}^H\tilde{\mathbf{A}}_k\boldsymbol{\Omega}}{\operatorname{Tr}\left(\boldsymbol{\Omega}\boldsymbol{\Omega}^H\tilde{\mathbf{A}}_k\boldsymbol{\Omega}\boldsymbol{\Omega}^H\mathbf{B}_k\right)+\sigma_k^2}\\
&=\frac{\mathbf{A}\mathbf{W}_{-k}\mathbf{H}^H_\mathrm{BR}\mathbf{h}_{\mathrm{r}k}\mathbf{h}^H_{\mathrm{r}k}+\mathbf{b}_k\mathbf{h}^H_{\mathrm{r}k}\mathbf{H}_\mathrm{BR}\mathbf{W}_{-k}\mathbf{H}^H_\mathrm{BR}}{\sum_{j\neq k}^{K}\left|\mathbf{h}_{\mathrm{r}k}^H\mathbf{H}_\mathrm{BR}\mathbf{w}_j\right|^2+\sigma_k^2},
\end{aligned}\right.
\end{equation}
with $\mathbf{W}=\sum_{j=1}^{K}\mathbf{w}_j\mathbf{w}^H_j$ and $\mathbf{W}_{-k}=\sum_{j\neq k}^{K}\mathbf{w}_j\mathbf{w}^H_j$.

Thus, the iterative direction of the next point is found by
\begin{equation}\label{eq_search_direction}
\mathcal{D}^{(e+1)}=\operatorname{Rgrad}f^{(e+1)}+\zeta^{(e+1)}_\mathrm{PR}\mathscr{M}_{\mathcal{T}_{\mathscr{O}}}(\mathcal{D}^{(e)}),
\end{equation}
where $\zeta^{(e+1)}_\mathrm{PR}$ denotes the Polak-Ribiere parameter in the $(e+1)$-th iteration, which is determined by the expression \eqref{eq_Polak-Ribiere_parameter}, shown at the top of the next page.
\begin{figure*}[t!]
\begin{equation}\label{eq_Polak-Ribiere_parameter}
\zeta^{(e+1)}_\mathrm{PR}=\operatorname{R}\left(\frac{\operatorname{Tr}\left(\left(\operatorname{Rgrad}f^{(e+1)}\right)^H\left(\operatorname{Rgrad}f^{(e+1)}-\mathscr{M}_{\mathcal{T}_{\mathscr{O}}}\left(\operatorname{Rgrad}f^{(e)}\right)\right)\right)}{\operatorname{Tr}\left(\left(\operatorname{Rgrad}f^{(e)}\right)^H\operatorname{Rgrad}f^{(e)}\right)}\right)
\end{equation}
\hrule
\end{figure*}
Hence, the updated rule of the $(e+1)$-th iteration point can be given by
\begin{equation}
\boldsymbol{\Omega}^{(e+1)}=\mathscr{R}\left(\boldsymbol{\Omega}^{(e)}+\psi^{(e+1)}\mathcal{D}^{(e+1)}\right),
\end{equation}
where $\mathscr{R}(\cdot)$ denotes the retraction operator, which projects the updated point onto the manifold, $\psi^{(e+1)}$ denotes the Armijo step in the $(e+1)$-th iteration.
The procedure of addressing \eqref{eq_pattern_design_trans} utilizing RCG is presented as Algorithm 2.

\subsection{Complexity and Convergence Analysis}
The overall procedure of the proposed iterative algorithm for solving the optimization problem in \eqref{eq_orig_opti} is summarized in Algorithm 3. The algorithm is considered to have converged when the difference in objective function values between two consecutive iterations falls below a predefined threshold $\tilde{\epsilon}$.

The computational complexity of the proposed algorithm primarily arises from solving the subproblems \eqref{eq_active_beamforming_sub_trans} and \eqref{eq_pattern_design_trans}. Specifically, for the active beamforming subproblem \eqref{eq_active_beamforming_sub_trans}, an efficient solution is obtained by applying the Lagrange multiplier method in combination with the bisection search technique. During this process, the dominant computational burden lies in calculating the optimal active beamforming vector associated with the dual variable, as given in \eqref{eq_active_beamforming_opt_dual}. Consequently, the total computational complexity of Algorithm 1 is given by $\mathscr{O}_1=\mathcal{O}(T_1(N_\mathrm{t}^3+N_\mathrm{t}^2))$ with $T_1$ denoting the total number of iterations of Algorithm 1. In terms of the subproblem \eqref{eq_pattern_design_trans}, the RCG method is leveraged to effectively obtain the suboptimal harmonic coefficients. The main computing complexity focuses on the calculation of the Euclidean gradient in \eqref{eq_Euclidean_gradient}. Thus, the computational complexity of Algorithm 2 can be calculated as $\mathscr{O}_2=\mathcal{O}(T_2(IM^3N_\mathrm{t}^2))$, where $T_2$ is the total number of iterations of Algorithm 2. As a result, the overall computational complexity of the proposed iterative algorithm can be estimated as $\mathscr{O}_\mathrm{tol}=T(\mathscr{O}_1+\mathscr{O}_2)$ with $T$ being the total number of iterations of Algorithm 3.

The convergence of the proposed algorithm is theoretically guaranteed, as the alternating optimization framework ensures that the objective function value does not decrease with each iteration. Specifically, at every step of the iteration, the updated solution yields an objective value that is no worse than that of the previous iteration. Furthermore, the convergence behavior of the proposed algorithm will be thoroughly evaluated through extensive simulations in Section \ref{sec:S5}.

\vspace{-4mm}
\begin{center}
\begin{small}
\begin{tabular}{p{8.5cm}}
\toprule[1pt]
\textbf{Algorithm 2:}  RCG algorithm for Solving Pattern Design Subproblem \eqref{eq_pattern_design_trans}   \\
\midrule[0.5pt]
1: Initialize variable $\boldsymbol{\Omega}^{(0)}$ and tolerance accuracy $\hat{\epsilon}$; Set the\\\quad iteration index $e=0$ \\
2: \textbf{Repeat} \\
3: \quad Calculate objective function $f(\boldsymbol{\Omega}^{(e)})$, Euclidean gradi-\\\qquad ent $\nabla f(\boldsymbol{\Omega}^{(e)})$ and Riemannian gradient $\operatorname{Rgrad}f^{(e+1)}$\\\qquad using \eqref{eq_Euclidean_gradient} and \eqref{eq_Riemannian_gradient}, respectively.\\
4: \quad Calculate the Polak-Ribiere parameter $\zeta^{(e+1)}_\mathrm{PR}$ and sea-\\\qquad rch direction $\mathcal{D}^{(e+1)}$ using \eqref{eq_Polak-Ribiere_parameter} and \eqref{eq_search_direction}, respectively.\\
5: \quad Calculate the Armijo step $\psi^{(e+1)}$, 
Update $\boldsymbol{\Omega}^{(e+1)}$ and \\\qquad the corresponding objective function  $f(\boldsymbol{\Omega}^{(e+1)})$; Let \\\qquad  $e\leftarrow e+1$.\\
6: \textbf{Until} $|f(\boldsymbol{\Omega}^{(e+1)})-f(\boldsymbol{\Omega}^{(e)})|\leq \hat{\epsilon}$\\
7: \textbf{Output:} the optimal $\boldsymbol{\Omega}$ and the corresponding harmonic\\\quad coefficients $\boldsymbol{\omega}_m, ~m\in\mathcal{M}$. \\
\bottomrule[1pt]
\end{tabular}
\end{small}
\end{center}

\vspace{-4mm}
\begin{center}
\begin{small}
\begin{tabular}{p{8.5cm}}
\toprule[1pt]
\textbf{Algorithm 3:}  The Proposed Iterative Algorithm for Addressing Optimization Problem \eqref{eq_orig_opti}   \\
\midrule[0.5pt]
1: Initialize variable $(\{\mathbf{w}^{(0)}_k\}_{k=1}^K, \{\boldsymbol{\omega}_m^{(0)}\}_{m=1}^M)$ and tolerance\\\quad accuracy $\tilde{\epsilon}$; Set the iteration index $t=0$ \\
2: \textbf{Repeat} \\
3: \quad Calculate $D_k^{(t)}$ and $u_k^{(t)}$ with the given $\big(\{\mathbf{w}^{(t)}_k\}_{k=1}^K,$\\\qquad$ \{\boldsymbol{\omega}_m^{(t)}\}_{m=1}^M\big)$ by leveraging \eqref{eq_auxiliary}.\\
4: \quad Solving the problem \eqref{eq_active_beamforming_sub_trans} considering the obtained\\\qquad $D_k^{(t)}$ and $u_k^{(t)}$ by Algorithm 1; Update $\{\mathbf{w}^{(t+1)}_k\}_{k=1}^K$\\\qquad with the obtained solution. \\
5: \quad Solving problem \eqref{eq_pattern_design_trans} with the obtained $\{\mathbf{w}^{(t+1)}_k\}_{k=1}^K$\\\qquad utilizing Algorithm 2 and update $\{\boldsymbol{\omega}_m^{(t+1)}\}_{m=1}^M$ with\\\qquad the achieved solution.\\
6: \textbf{Until} the gap in objective function values between adj-\\\quad acent iterations falls below $\tilde{\epsilon}$\\
7: \textbf{Output:} the optimal $(\{\mathbf{w}_k\}_{k=1}^K, \{\boldsymbol{\omega}_m\}_{m=1}^M)$. \\
\bottomrule[1pt]
\end{tabular}
\end{small}
\end{center}

\vspace{-2mm}
\section{Simulation Results}\label{sec:S5}
This section reports numerical results that attempt to answer the following three questions:
\begin{itemize}
\item[Q1)] What is the extent of performance improvement enabled by pattern-reconfigurable FRIS in multiuser settings?
\item[Q2)] What is the fundamental mechanism by which the shape of the radiation pattern affects system performance?
\item[Q3)] To what extent can the pattern-reconfigurable FRIS reduce the number of RIS elements and antennas compared to a conventional fixed-pattern RIS?
\end{itemize}

\vspace{-2mm}
\subsection{Setup}
To understand the effectiveness of the proposed pattern-reconfigurable FRIS (which is referred to as the \textbf{FRIS+MMSE} scheme), simulation results are presented and compared against the following three benchmark schemes:
\begin{itemize}
\item \textbf{FRIS+ZF scheme}---This scheme uses the zero-forcing (ZF) principle to design the active beamforming vectors, $\{\mathbf{w}_k\}_{k=1}^K$, which are found as
\begin{equation}
\mathbf{w}_k=\sqrt{P_k}\frac{\mathbf{W}(:,k)}{\|\mathbf{W}(:,k)\|_2},~ k\in\mathcal{K},
\end{equation}
in which $\mathbf{W}=\mathbf{X}^H\left(\mathbf{X}\mathbf{X}^H\right)^{-1}$ is obtained based on $\mathbf{X}=[\mathbf{h}_{\mathrm{r}1},\dots, \mathbf{h}_{\mathrm{r}k},\dots, \mathbf{h}_{\mathrm{r}K}]^H\mathbf{H}_\mathrm{BR}$, and $P_k$ denotes the allocated power of the BS for the $k$-th UE, which will be carefully optimized by the CVX solver.
\item \textbf{Traditional RIS with 38.901 pattern scheme}---In this scheme, each element's pattern is chosen to be the 3GPP  38.901 pattern. Additionally, the passive beamforming capability of the conventional RIS is utilized to manipulate the incident signals, with the design of the beamforming coefficients following the method proposed in \cite{xiao2025multi}.
\item \textbf{Traditional RIS with isotropic pattern scheme}---Here, the isotropic pattern is adopted for the RIS elements, which is commonly used in most existing studies on traditional RIS-assisted communication systems. Similarly, the passive beamforming capability of the conventional RIS is employed to manipulate the incident signals.
\end{itemize}

Unless stated otherwise, the system parameters are chosen based on the values listed in TABLE \ref{tab:table1}. 

\begin{table}[h!]
\renewcommand\arraystretch{1.5}
\centering
\caption{Parameters Setting}\label{tab:table1}
\resizebox{.8\columnwidth}{!}{
\begin{tabular}{M{3.8cm}|M{4cm}}
		\hline\hline
		\textbf{\normalsize{Parameters}} & \textbf{\normalsize{Symbol and Value}}\\
		\hline\hline
		\normalsize{Carrier frequency}&\normalsize{$f_\mathrm{c}= 5$ GHz}   \\
		\hline
		\normalsize{Distance between the BS and the FRIS}&\normalsize{$d_\mathrm{BR}=200$ m}   \\
		\hline
		\normalsize{Noise power}& \normalsize{$\sigma_k^2=-110$ dBm} \\
		\hline
		\normalsize{Path-loss exponent }& \normalsize{$\alpha=2.6$}\\
		\hline
		\normalsize{Power budget at the BS }& \normalsize{$P_\mathrm{tmax}=30$ dBm}\\
		\hline
		\normalsize{Reference power gain}&\normalsize{$\rho=-20$ dB}   \\
		\hline
        \normalsize{Relative position between the BS and the FRIS}&\normalsize{$\theta_\mathrm{t, B}^\mathrm{LoS}= 70^\circ$, $\phi_\mathrm{t, B}^\mathrm{LoS}= 150^\circ$, $\theta_\mathrm{r}^\mathrm{LoS}= 110^\circ$, $\phi_\mathrm{r}^\mathrm{LoS}= -30^\circ$ }   \\ 
        \hline
        \normalsize{The number of UE}& \normalsize{$K= 3$}\\
        \hline
        \normalsize{Weighted factor}& \normalsize{$\epsilon_k= \frac{1}{3}, ~k\in\mathcal{K}$}\\
\hline
\end{tabular}
}
\end{table}

In addition, the multi-path number $L$ and $Z$ follow uniform distribution $\mathcal{U}[3, 8]$. The elevation and azimuth AoD and AoA for each path at the BS and the FRIS are independently drawn from $\mathcal{U}[10^\circ, 70^\circ]$, $\mathcal{U}[90^\circ, 150^\circ]$, $\mathcal{U}[90^\circ, 180^\circ]$ and $\mathcal{U}[-90^\circ, -30^\circ]$, respectively. The elevation and azimuth AoD of the LoS component from the FRIS to the $k$-th UE follow $\mathcal{U}[100^\circ, 110^\circ]$ and $\mathcal{U}[-20^\circ, 80^\circ]$, respectively. Also the corresponding elevation and azimuth AoD of  each path follow $\mathcal{U}[50^\circ, 160^\circ]$ and $\mathcal{U}[-70^\circ, 130^\circ]$, respectively. 

\vspace{-2mm}
\subsection{Performance Evaluation}
\subsubsection{Convergence Analysis} 
To evaluate the convergence of the proposed iterative algorithm, Fig. \ref{fig:Iterations} illustrates the simulation results of the objective function values as the number of iterations increases under different system settings. Notably, the objective function values exhibit a monotonically increasing trend with each iteration, indicating that the value at each step is no less than that of the preceding iteration throughout the iterative process. Therefore, the results confirm that the convergence of the proposed algorithm is guaranteed.

\begin{figure}[ht]
\centering
\includegraphics[width=.8\columnwidth]{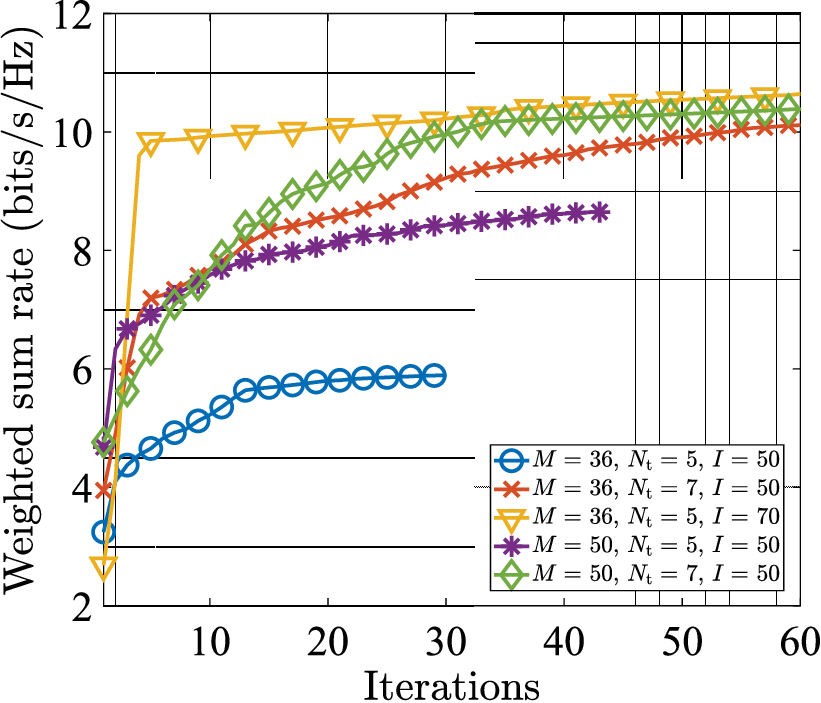}\\
\vspace{-2mm}	
\caption{The weighted sum rate against the iterations.}\label{fig:Iterations}
\vspace{-2mm}
\end{figure}

\begin{figure}[ht]
\centering
\includegraphics[width=.8\columnwidth]{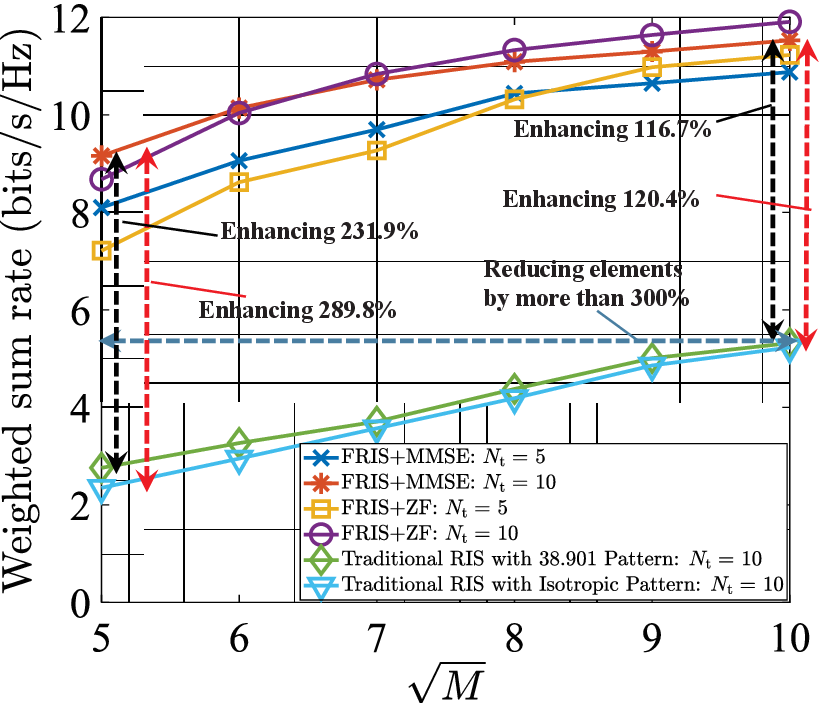}\\
\vspace{-2mm}
\caption{The weighted sum rate versus the number of elements with $I=50$.}\label{fig:WS_vs_M}
\vspace{-2mm}
\end{figure}

\subsubsection{Weighted Sum Rate versus the Number of Elements, $M$, and the Number of Antennas, $N_\mathrm{t}$} In this section, we evaluate the performance gain provided by the pattern-reconfigurable FRIS and its hardware efficiency. Fig.~\ref{fig:WS_vs_M} illustrates how the weighted sum rate varies with the number of elements deployed at the FRIS, under different antenna configurations. Specifically, we can find that the weighted sum rate is a positive function with respect to the number of elements $M$ across all cases. This is because a larger number of elements provides more DoFs to enhance the channel quality between the RIS and the BS/UEs.  An interesting trend is observed between the FRIS+MMSE and FRIS+ZF schemes. When the number of elements is small, FRIS+MMSE outperforms due to its balanced consideration of interference suppression and signal enhancement. In contrast, FRIS+ZF initially performs worse, as it focuses only on interference cancellation. However, as the number of elements increases, the enhanced signal control allows FRIS+ZF scheme to gradually surpass FRIS+MMSE. Overall, the performance gap remains small, demonstrating the effectiveness of the proposed algorithm. 

To ensure a clear performance comparison, the case of BS with $N_\mathrm{t} = 10$ antennas is utilized to implement the traditional RIS-based schemes considering 38.901 pattern and isotropic pattern. Simulation results show that even compared to the FRIS+MMSE scheme with $N_\mathrm{t} = 5$, the traditional RIS scheme using the 38.901 and isotropic radiation patterns perform significantly worse. Under the same conditions, the proposed scheme achieves improved performance gains ranging from $116.7\%$ to $231.9\%$ and $120.4\%$ to $289.8\%$ over traditional RIS with the 38.901 and isotropic patterns, respectively, as the number of elements varies from $100$ to $25$. The average performance improvements are $161.5\%$ and $176.2\%$, respectively, highlighting the strong potential of the pattern-reconfigurable FRIS in enhancing system performance. Moreover, compared to the traditional RIS scheme, the FRIS-assisted schemes can reduce the required number of elements by more than $300\%$ while maintaining the same performance (e.g., FRIS+MMSE with $M = 25$ versus traditional RIS with $M = 100$). This demonstrates the remarkable hardware efficiency of FRIS.

\begin{figure}[ht]
\centering
\includegraphics[width=.8\columnwidth]{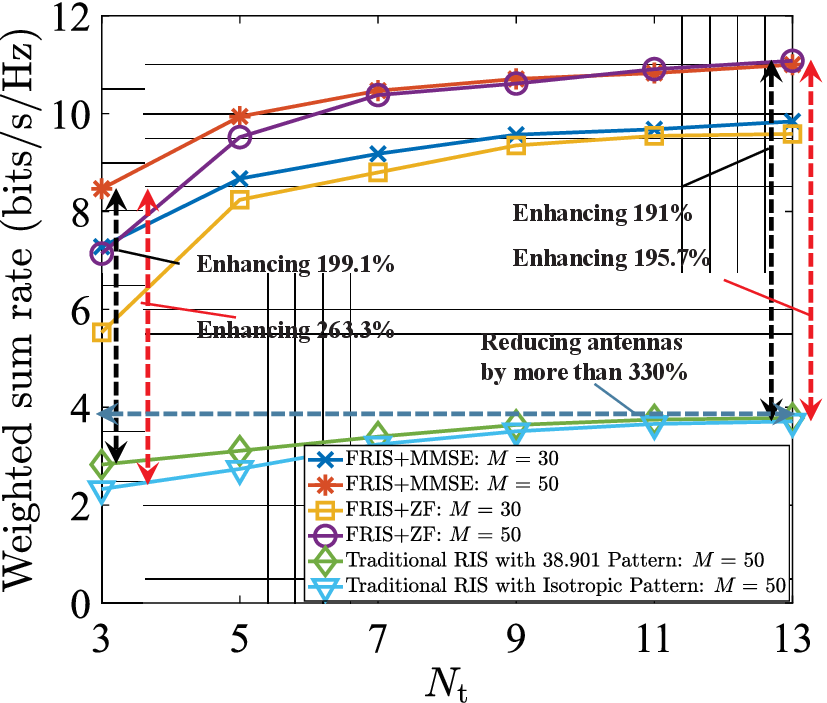}\\
\vspace{-2mm}
\caption{The weighted sum rate versus the number of antennas with $I=50$.}\label{fig:WS_vs_Nt}
\vspace{-2mm}
\end{figure}

\begin{figure*}[ht]
\centering
\includegraphics[width=.9\linewidth]{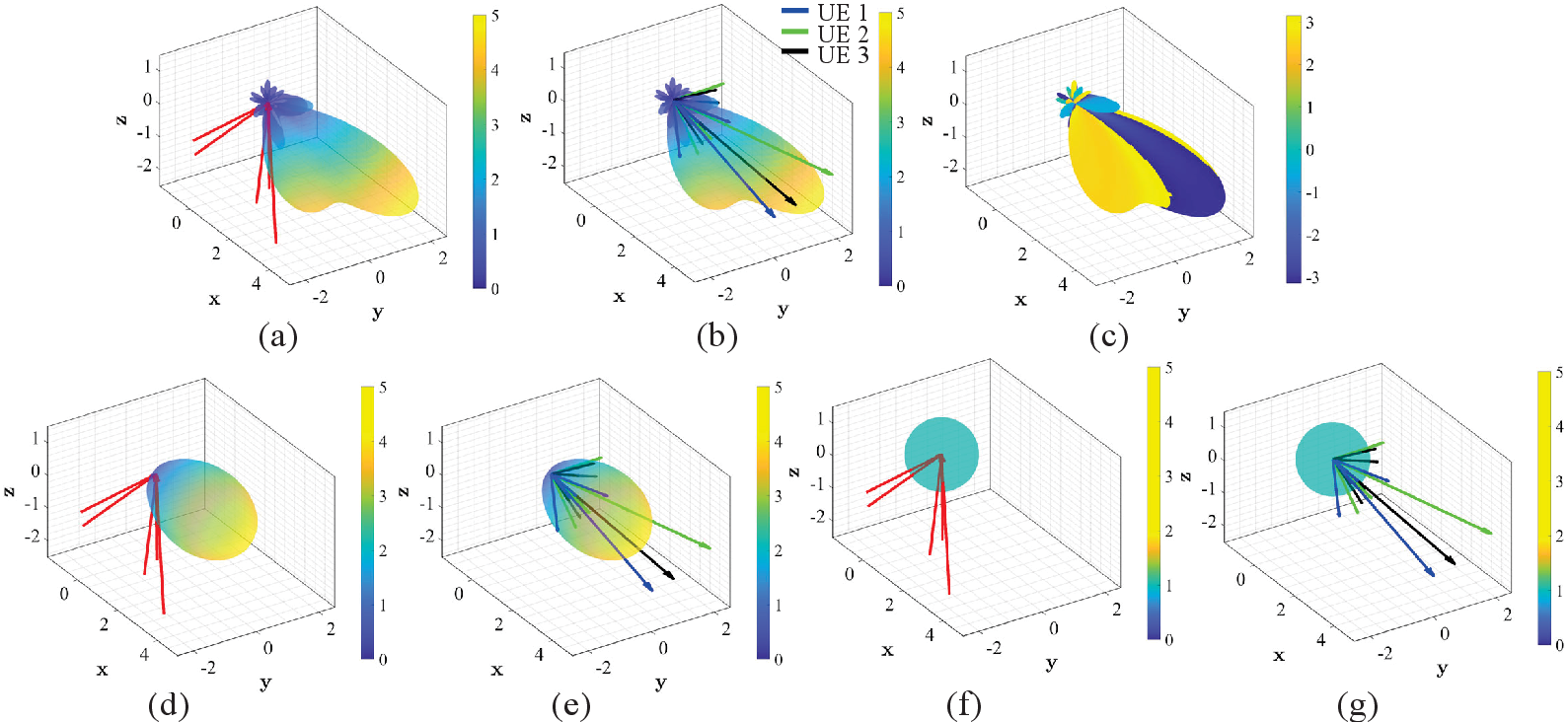}\\
\vspace{-2mm}
\caption{Visualization of the multi-path geometry and the optimal pattern of a selected element, 38.901 pattern and isotropic pattern considering $M=$ 50, $N_\mathrm{t}=$ 5, $I=$ 50: (a) Arrival multi-path geometry at the selected element and its optimal radiation pattern; (b) Departure multi-path geometry at the selected element and its optimal radiation pattern; (c) The phase distribution of the optimal radiation pattern; (d) Arrival multi-path geometry at the selected element and the 38.901 radiation pattern; (e) Departure multi-path geometry at the selected element and the 38.901 radiation pattern; (f) Arrival multi-path geometry at the selected element and the isotropic radiation pattern; (g) DAeparture multi-path geometry at the selected element and the isotropic radiation pattern. Note that the length of arrows denotes the strength of multi-path signals.}\label{fig:multi-path_pattern_illustration}
\vspace{-2mm}
\end{figure*}

The impact of the number of antennas, $N_\mathrm{t}$, on the weighted sum rate is studied under different values of $M$ in Fig.~\ref{fig:WS_vs_Nt}. As expected, the weighted sum rate exhibits an increasing trend as $N_\mathrm{t}$ increases across all scenarios. Moreover, the FRIS+MMSE scheme significantly outperforms the traditional RIS schemes with 38.901 and isotropic patterns, achieving performance enhancement ranging from $191.0\%$ to $199.1\%$ and $195.7\%$ to $263.3\%$ respectively, as the number of antennas varies from $13$ to $3$. Similarly, to achieve the same performance level, the pattern-reconfigurable FRIS can significantly reduce the required number of antennas at the BS. For example, with the assistance of the FRIS+MMSE scheme, the number of antennas can be reduced by more than $3.3$ times, as observed by comparing the FRIS+MMSE scheme with $N_\mathrm{t}=3$ and the conventional RIS-aided scheme with $N_\mathrm{t}=13$. This reduction can substantially lower the overall hardware cost.

\subsubsection{Understanding the Core Principle Through Multi-path Geometry and Pattern Alignment} Here, we aim to further understand the underlying reason why the pattern-reconfigurable FRIS achieves such significant performance gains. We focus on the optimal patterns of the elements for explanation. In Fig.~\ref{fig:multi-path_pattern_illustration}, we illustrate the spatial relationship between the multi-path geometry and the optimal pattern of a selected element, the 38.901 pattern and the isotropic pattern. As can be seen, when the BS transmits signals via a pattern-reconfigurable FRIS, the FRIS dynamically adjusts the radiation patterns of its elements based on the characteristics of both the incoming and outgoing multi-path channels. Specifically, to collect as much signal energy as possible from the BS, the radiation pattern  is diversified toward the strongest multi-path components, meaning that the main lobe is split and directed toward multiple dominant paths, while still accounting for weaker ones, as illustrated in Fig.~\ref{fig:multi-path_pattern_illustration}(a). In addition, the FRIS applies independent phase shifts to each multi-path signal to enable constructive combining at the FRIS, thereby enhancing the quality of the received signal, as shown in Fig.~\ref{fig:multi-path_pattern_illustration}(c).

In contrast, traditional RIS can only passively receive signals and applies uniform phase shifts across all the multi-path components, as observed in Fig.~\ref{fig:multi-path_pattern_illustration}(d) and Fig.~\ref{fig:multi-path_pattern_illustration}(f), thus lacking the fine-grained, path-aware pattern control enabled by FRIS. When forwarding signals to the users, the FRIS likewise adjusts its patterns to preferentially direct energy through the strongest multi-path routes while considering weaker ones, as shown in Fig.~\ref{fig:multi-path_pattern_illustration}(b). Independent phase control is applied to ensure coherent combining at the user side.

\begin{figure}[ht]
\centering
\includegraphics[width=.8\columnwidth]{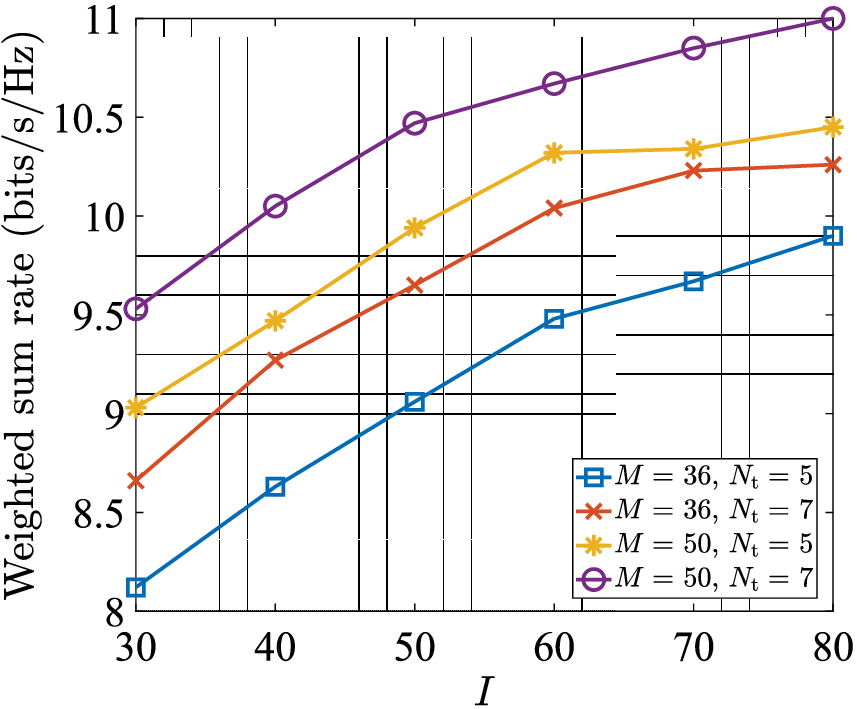}\\
\vspace{-2mm}
\caption{The weighted sum rate versus the truncated length $I$.}\label{fig:WS_vs_I}
\vspace{-2mm}
\end{figure}

\begin{figure}[ht]
\centering
\includegraphics[width=.8\columnwidth]{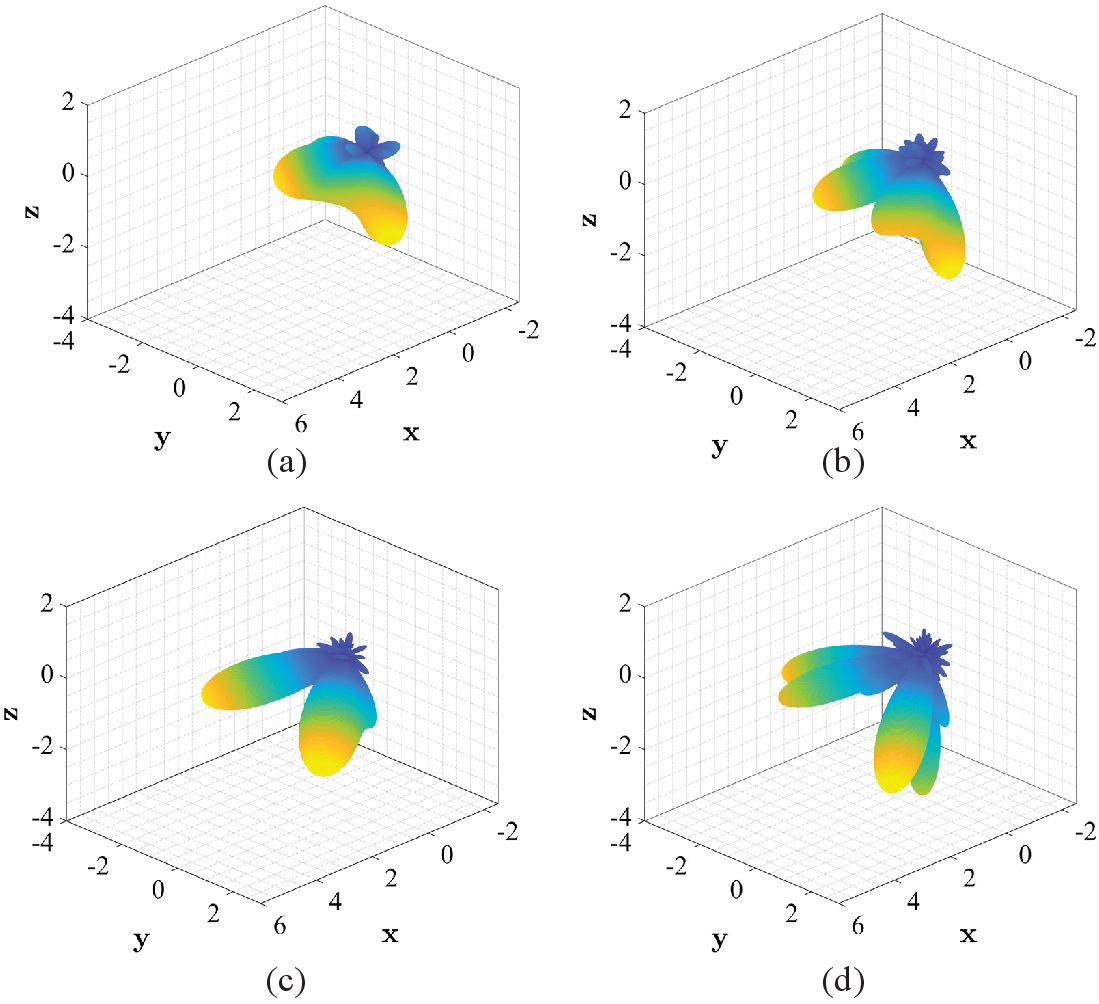}\\
\vspace{-2mm}
\caption{The optimal pattern versus the the truncated length $I$ with (a) $I=25$; (b) $I=36$; (c) $I=49$; (d) $I=81$.}\label{fig:pattern_vs_I}
\vspace{-2mm}
\end{figure}

\subsubsection{Weighted Sum Rate versus Truncated Length $I$} Now, we conclude this section by investigating the varied correlation between the weighted sum rate and the truncated length $I$, taking into account the different number of elements and antennas in Fig.~\ref{fig:WS_vs_I}. It is observed that all the cases indicate a rising trend as $I$ increases, while the rate of growth slows down. This trend can be explained by examining the optimal radiation pattern of a selected element under different values of $I$, as illustrated in Fig.~\ref{fig:pattern_vs_I}. In particular, we see that as $I$ increases, the DoFs for pattern reconfiguration expand, resulting in more directive main lobes in each element's radiation pattern. This enhances the capability to capture and deliver more signal energy to the users, thereby leading to a steady increase in the weighted sum rate with larger values of $I$. The diminishing rate of increase in the weighted sum rate is primarily due to the fact that the energy of the radiation pattern is mainly concentrated in the low-frequency components (as evidenced by the results in Fig.~\ref{fig:SHOD_method_verification}). As a result, the effective increase in the DoFs of the pattern reconfiguration becomes marginal as $I$ grows, leading to a more gradual improvement in performance.

\vspace{-2mm}
\section{Conclusion}\label{sec:S6}
This paper proposed a novel pattern-reconfigurable FRIS framework. We first investigated the potential by comparing the received signal power in communication systems aided by the pattern-reconfigurable FRIS, the position-reconfigurable FRIS, and the traditional RIS. Theoretical results showed that the pattern-reconfigurable FRIS offers significant advantages in modulating transmission signals over the other two. Motivated by this, we then considered a multiuser communication scenario aided by a pattern-reconfigurable FRIS. The SHOD method was employed to accurately model the fluid elements' patterns, and we subsequently formulated an optimization problem to maximize the weighted sum rate, subject to power and pattern energy constraints. An iterative algorithm based on the MMSE and RCG methods was developed. Simulation results illustrated that the proposed pattern-reconfigurable FRIS delivers remarkable performance improvements compared to traditional RIS using 3GPP 38.901 and isotropic patterns. 

\appendices

\ifCLASSOPTIONcaptionsoff 
  \newpage
\fi
\bibliographystyle{IEEEtran}

\end{document}